\newcommand{\dongningstyle}{false}
\newtheorem{theorem}{Theorem}
\newtheorem{lemma}{Lemma}
\newtheorem{defn}{Definition}
        \newcommand{\RVEC}[1]{\boldsymbol{\uppercase{#1}}}
        \newcommand{\RMAT}[1]{\underline{\boldsymbol{\uppercase{#1}}}}
        \newcommand{\MAT}[1]{\boldsymbol{\underline{\lowercase{#1}}}}
        \newcommand{\RVEC}[1]{\boldsymbol{\lowercase{#1}}}
        \newcommand{\RMAT}[1]{\boldsymbol{\boldsymbol{\uppercase{#1}}}}
        \newcommand{\MAT}[1]{\uppercase{#1}}
\newcommand{\mmse}[1]{\mathsf{mmse}\left( #1 \right)}
\newcommand{\MEXP}{\mathbb{E}}
\newcommand{\Probb}{\mathsf{P}}
\newcommand{\LOG}[1]{\log \left( #1 \right)}
\newcommand{\uT}{\mathsf{T}}
\newcommand{\ud}{\mathsf{d}}
\newcommand{\IE}{{\it i.e.}}
\newtheorem{corollary}{Corollary}
\newcommand{\rw}{\RVEC{W}}
\newcommand{\rtw}{\RVEC{\widetilde{W}}}
\newcommand{\ry}{\RVEC{Y}}
\newcommand{\rty}{\RVEC{\widetilde{Y}}}
\newcommand{\rby}{\RVEC{\overline{Y}}}
\newcommand{\rx}{\RVEC{X}}
\newcommand{\rtx}{\RVEC{\widetilde{X}}}
\newcommand{\rhx}{\RVEC{\widehat{X}}}
\newcommand{\rz}{\RVEC{Z}}
\newcommand{\ru}{\RVEC{u}}
\newcommand{\rv}{\RVEC{v}}
\newcommand{\ra}{\RVEC{a}}
\newcommand{\Sqrt}[1]{\sqrt{#1}\,}
\newcommand{\rA}{\RMAT{A}}
\newcommand{\rB}{\RMAT{B}}
\newcommand{\rn}{\RVEC{n}}
\newcommand{\rudx}{\RVEC{\underline{X}}}
\newcommand{\rudy}{\RVEC{\underline{Y}}}
\newcommand{\rudz}{\RVEC{\underline{Z}}}
\newcommand{\rudw}{\RVEC{\underline{W}}}
\newcommand{\rudu}{\RVEC{\underline{u}}}
\newcommand{\rudH}{\RMAT{\underline{H}}}
\newcommand{\rudV}{\RMAT{\underline{V}}}
\newcommand{\rudW}{\RMAT{\underline{W}}}
\newcommand{\rudLam}{\RMAT{\underline{\Lambda}}}
\newcommand{\rudG}{\RMAT{\underline{G}}}
\newcommand{\diag}{\text{diag}}
\newcommand{\rLam}{\RMAT{\Lambda}}
\newcommand{\rSig}{\RMAT{\Sigma}}
\newcommand{\mLam}{\MAT{\Lambda}}
\newcommand{\rU}{\RMAT{U}}
\newcommand{\rW}{\RMAT{W}}
\newcommand{\rV}{\RMAT{V}}
\newcommand{\rtV}{\widetilde{\RMAT{V}}}
\newcommand{\rtv}{\widetilde{\RVEC{v}}}
\newcommand{\rH}{\RMAT{H}}
\newcommand{\rG}{\RMAT{G}}
\newcommand{\rQ}{\RMAT{Q}}
\newcommand{\mU}{\MAT{U}}
\newcommand{\mQ}{\MAT{Q}}
\newcommand{\mR}{\MAT{R}}
\newcommand{\mI}{\MAT{I}}
\newcommand{\mV}{\MAT{V}}
\newcommand{\mW}{\MAT{W}}
\newcommand{\mH}{\MAT{H}}
\newcommand{\mB}{\MAT{B}}
\newcommand{\mA}{\MAT{A}}
\newcommand{\mSig}{\MAT{\Sigma}}
\newcommand{\setD}{\mathcal{D}}
\newcommand{\PP}{\gamma}
\newcommand{\markov}[3]{\ensuremath{#1\text{---}#2\text{---}#3}}
\newcommand{\CN}[2]{\mathcal{CN}(#1, #2)}
\newcommand{\MI}[1]{\ensuremath{\mathcal{I} \left( #1 \right)}}
\newcommand{\cov}[1]{\mathsf{cov}\left\{#1\right\}}
\title{The Degrees of Freedom of MIMO Interference Channels without
  State Information at Transmitters}
\author{Yan Zhu
%, \IEEEmembership{Student Member,~IEEE} 
and Dongning Guo
%, ~\IEEEmembership{Senior Member,~IEEE}%
%\thanks{Manuscript received ... ...  }%
%\thanks{...}
%\thanks{...}
%\thanks{...}
%\thanks{...}
%\thanks{...}
%\thanks{...}
%\thanks{...}
\thanks{Y.~Zhu was with the Department of Electrical Engineering and
  Computer Science, Northwestern University, Evanston, IL 60208, USA.
  He is now with Broadcom Inc., Sunnyvale, CA. USA.}
\thanks{D.~Guo is with the Department of Electrical Engineering and
  Computer Science, Northwestern University, Evanston, IL, USA.}
% 60208, USA.}
\thanks{This work has been presented in part at Allerton Conference on
  Communication, Control and Computing, Monticello, IL, USA in September 2009.}
\thanks{This work was supported by NSF under grant
    CCF-0644344 and DARPA under grant W911NF-07-1-0028.}}
\begin{document}
\maketitle
\begin{abstract}
  This paper fully determines the degree-of-freedom (DoF) region of
  two-user interference channels with arbitrary number of transmit and
  receive antennas in the case of isotropic and independent (or
  block-wise independent) fading, 
  where the channel state information is available to the receivers
  but not to the transmitters. The result characterizes the capacity
  region to the first order of the logarithm of the signal-to-noise 
  ratio~(SNR) in the high-SNR regime. The DoF region is achieved using
  random Gaussian codebooks independent of the channel states, which
  implies that it is impossible to increase the DoF using beamforming 
  and interference alignment in the absence of channel state
  information at the transmitters.
\end{abstract}

\begin{keywords}
   Capacity region, channel state information, degree of freedom
   (DoF), interference channel, isotropic fading, multiple antennas,
   multiple-input multiple-output (MIMO) channel, wireless networks. 
\end{keywords}

\section{Introduction}

The interference channel is one of the most important models for the
physical layer of wireless networks.
% Many typical scenarios in cellular and ad hoc networks can be well
% modeled using interference channels.
Some recent breakthroughs in understanding the
fundamental limits of such channels, with or without multiple
antennas are reported in~\cite{Etkin08IFC,Cadambe08IFA,Shang10MIMOIC, Gou08KIFC-MIMO,
  Annapureddy09MIMOIC}. Most existing studies of interference channels assume that full
channel state information~(CSI) is available to all transmitters and
receivers. In practice, however, the state of the channel is usually measured at
the receivers, and it is often difficult for the transmitters to acquire the CSI
accurately in a timely manner.

This paper studies a two-user multiple-input multiple-output (MIMO)
interference channel subject to isotropic fading, where the channel
state is independent over time, and its realization is known to the
receivers but \emph{not} to the transmitters. The channel model is
described in Section~\ref{sec:model}.  An example of the channel is
illustrated in Fig.~\ref{fig:mimo}.  The degree-of-freedom (DoF)
region of the MIMO interference channel is completely characterized by 
Theorem~\ref{thm:new} in Section~\ref{sec:result}.  This is the main
result in this paper.  The result
indicates that without CSI at the transmitters~(CSIT), no additional
gains in terms of DoF can be achieved using beamforming or interference
alignment, which is in contrast to the results for the case with full
CSI shown in~\cite{Jafar07MIMOIFC}. A detailed proof
Theorem~\ref{thm:new} is developed in Sections~\ref{sec:result}
and~\ref{sec:proof}.  

\begin{figure}
  \centering
  \includegraphics[width=2.6in]{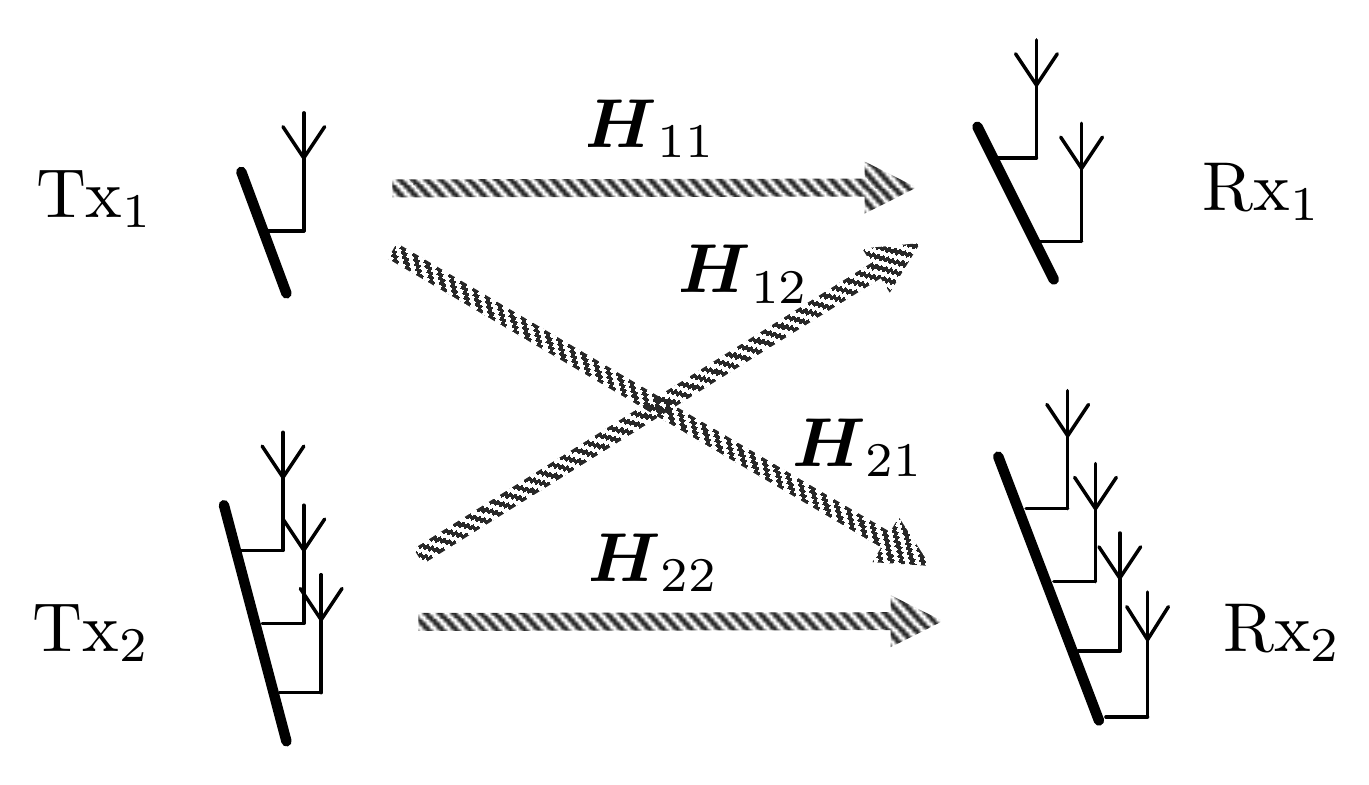}
  \caption{A two-user MIMO interference channel.}
 \label{fig:mimo}
\end{figure}

Related works~\cite{RajPra09IT, Raja09DMT-IFC, Akuiyibo08IFCMIMO,
  Huang09MIMODoFb, Vaze09IFC, ZhuGuo09Allerton} also consider
interference channels without CSIT.
The case of slow fading is modeled as compound interference channels in~\cite{RajPra09IT, Raja09DMT-IFC}, where the capacity of a single-antenna two-user interference channel is studied in~\cite{RajPra09IT}, and the diversity-multiplex trade-off of the same model is studied in~\cite{Raja09DMT-IFC}.
%Closely related to this work are studies of
In the case of fast (independent) fading,
% in~\cite{Akuiyibo08IFCMIMO, Huang09MIMODoFb, Vaze09IFC, ZhuGuo09Allerton}.  In~\cite{Akuiyibo08IFCMIMO},
Akuiyibo {\it et al} \cite{Akuiyibo08IFCMIMO} derived an outer bound of capacity region for two-user MIMO
interference channels with Rayleigh fading, which is %only 
tight in terms of the DoF in some special cases.
Tighter outer bounds on the DoF region have been
developed by Huang {\it et al} in~\cite{Huang09MIMODoFb}, who also assume Rayleigh
fading, and by Vaze and Varanasi in~\cite{Vaze09IFC}, who assume a
more general model, and by the authors in~\cite{ZhuGuo09Allerton},
under the assumption of general isotropic fading.\footnote{The fading models of~\cite{Vaze09IFC} and~\cite{ZhuGuo09Allerton} overlap but neither fully covers the other. Both models include independent Rayleigh fading studied in~\cite{Huang09MIMODoFb} as a special case. } A gap remains between the inner and outer bounds in~\cite{Huang09MIMODoFb, Vaze09IFC, ZhuGuo09Allerton}. A specific example is the case where the two users have one and three transmit antennas, and two and four receiver antennas, respectively, as shown in Fig.~\ref{fig:mimo}. The DoF pair $(1,1)$ has been shown to be achievable but the best outer bounds in~\cite{Huang09MIMODoFb, Vaze09IFC, ZhuGuo09Allerton} includes the pair $(1,1.5)$. This paper closes the gap by showing that achievable region of~\cite{ZhuGuo09Allerton} is the exact DoF region. In the aforementioned case, the pair $(1,1.5)$ is not achievable. 

\section{Channel Model}\label{sec:model}
Consider a two-user interference channel, where each transmitter has a
dedicated message for its intended receiver. Suppose transmitter $t$ is equipped
with $M_t$ antennas and receiver $r$ is equipped with $N_r$
antennas for $t, r =1,2$. The signals received
in the $i$-th interval by the two users can be described as:\footnote{As a convention, we use bold fonts to denote random variables, random vectors and random matrices, and we use the corresponding normal fonts to denote their realizations. }
\begin{subequations}\label{eq:sys}
  \begin{align}
    \ry[i] =   \rH_{11}[i]\rw[i] +   \rH_{12}[i]\rx[i] + \ru_1[i] \\
    \rz[i] =   \rH_{21}[i]\rw[i] +   \rH_{22}[i]\rx[i] + \ru_2[i]
  \end{align}
\end{subequations}
where 
%$i=1,\dots, n$ is the time index, 
$\rw(M_1\times 1)$ and $\rx(M_2 \times 1)$ denote the transmitted
signals, $\rH_{rt}(N_r\times M_t)$ denotes the channel from
transmitter $t$ to receiver $r$, and $\ru_r(N_r\times 1)$ denotes the
thermal noise at receiver $r$, which consists of independent
identically distributed~(i.i.d.) circularly symmetric
complex-Gaussian~(CSCG) random variables of unit variance (denoted by
$\ru_r \sim \CN{0}{\mI_{N_r}}$). The noise process $\{\ru_r[i]\}$ is
i.i.d.\ over time ($i=1,2,\dots$) and independent of the signals and
fading processes $\{\rH_{r1}[i], \rH_{r2}[i]\}$.

The usual power constraint on all codewords of both users is assumed,
\IE, codewords $(w[1], \dots, w[n])$ and $(x[1], \dots, x[n])$ satisfy  
\begin{align}
 \frac{1}{n}\sum_{i=1}^n \|w[i]\|^2 \leq \PP \quad \text{and} \quad
 \frac{1}{n}\sum_{i=1}^n \|x[i]\|^2 \leq \PP  \nonumber
\end{align} 
where $\|\cdot\|$ stands for the Euclidean norm of a vector (more 
generally, it denotes the Frobenius norm of a matrix). 
Since the noise processes are normalized, $\PP$ is regarded as the 
constraint on the average transmit signal-to-noise ratios~(SNR). 
 
The no-CSIT assumption means that the realization of $(\rH_{r1},
\rH_{r2})$ is available to receiver $r$ only ($r=1,2$), whereas the 
transmitters have no knowledge about the channel  
matrices except for their statistics.
The fading process is assumed to be block-wise independent, \IE, the
channel matrices $\rH_{rt}[i]$ remain the same in a constant $T$
consecutive time slots and then change to independent values in the
next block of $T$ slots.  The constant $T$ is often referred to as
the \emph{coherent time}~\cite{Tse05book}. Moreover, the coherence
blocks of all links are perfectly aligned, meaning that the gains
  of all links change at the same time. In particular, if $T=1$, the
  fading process becomes i.i.d. over time. 

The statistics of the fading processes are arbitrary except that all
$\rH_{rt}$ are almost surely of full rank, of finite average power,
\emph{i.e.}, $\MEXP\|\rH_{rt}\|^2< \infty$, and {\em isotropic} in the
following sense:

\begin{defn}
  A complex-valued random matrix $\rG$ is \emph{isotropic} if $\rG
  \mQ$ is identically distributed as $\rG$ for every deterministic
  unitary matrix $\mQ$ of compatible size. 
\end{defn}

We adopt this notion of isotropic fading, which was introduced
in~\cite{Zheng02MIMO}.  
In the absence of CSIT, isotropic fading is a plausible assumption
because there is no reason to prefer signaling toward any  
direction to any other one. Furthermore, many important fading models
belong to this category, including Rayleigh fading studied
in~\cite{Huang09MIMODoFb}, where the channel matrices consist of
i.i.d.\ CSCG entries. 

\section{The Main Theorem and Achievability Proof}\label{sec:result}

A rate pair $(R_1, R_2)$ is said to be achievable if there exist two codebooks
of size $\left\lceil 2^{nR_1} \right\rceil$ and $\left\lceil 2^{nR_2}
\right\rceil$ for the two users, respectively, such that the average
decoding error at each receiver vanishes as the code length
$n\to\infty$. The DoF region is defined as\footnote{Throughout this
  paper, the units of information are bits and all logarithms are of
  base 2.  The DoF is of course invariant to the units of information.}
 %pair $(d_1, d_2)$ as
\begin{multline*}
  \setD = \Big\{(d_1, d_2) \Big| \exists \text{ positive achievable pair }(R_1(\PP),
      R_2(\PP)) \\
      \text{ with }  d_j = \lim_{\PP \to \infty}
      \frac{R_j(\PP)}{\LOG{1+\PP}},  \, j=1,2\Big\}.
\end{multline*}
% The DoF offers a good approximation
%for the capacity in the high-SNR regime
%Since $\log(1 + \PP)$ is the capacity of a point-to-point Gaussian channel with single antenna,
%$d_j$ It 
Evidently, a DoF is essentially the number of single-antenna point-to-point links that 
provides the same rate % for user $j$ 
at high SNRs~\cite{Telatar99MIMO, Jafar07MIMOIFC}.\footnote{%We note that  
The generalized degree of freedom~(GDoF) 
%has recently been 
proposed in~\cite{Etkin08IFC}  %. This
is out of the scope of this paper.}

\begin{theorem}\label{thm:new}
  Suppose user~1 has no more receive antennas than user~2, \IE, $N_1 \leq N_2$.
  %Suppose $N_1 \leq N_2$. 
  The DoF region of channel~\eqref{eq:sys} with full rank isotropic fading
   consists of all rate %non-negative 
   pairs $(d_1,d_2)$ satisfying %the following constraints
   \begin{subequations}\label{eq:new}
     \begin{align}
       &0 \le d_j \leq \min(M_j, N_j)\,, \quad j=1,2\label{eq:new1} \\
       &d_1 + \frac{\min(M_2, N_1)-L}{\min(M_2, N_2)-L} (d_2-L) \leq \min(M_1, N_1)
       \label{eq:new2}
     \end{align}
   \end{subequations}
   where
   \begin{align}
     L=\min(M_1+M_2, N_1)-\min(M_1, N_1)      
   \end{align}
   and we use the
   convention that $\frac{0}{0}=1$.  The DoF region in the case of
   $N_1\ge N_2$ is similarly determined by symmetry.
 \end{theorem}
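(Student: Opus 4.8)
\emph{Proof outline.} The achievability half is already available: the region~\eqref{eq:new} is exactly the inner bound of~\cite{ZhuGuo09Allerton}, which is attained by i.i.d.\ isotropic Gaussian codebooks that ignore the channel realizations, so nothing more is needed there. The whole task is the converse, \IE, showing that every achievable $(d_1,d_2)$ obeys~\eqref{eq:new}. The bounds~\eqref{eq:new1} are the routine single-user MIMO cut-set bounds (user~$j$ by itself cannot beat an $M_j\times N_j$ point-to-point link), so the real content is~\eqref{eq:new2}. My first move would be to use the no-CSIT assumption together with isotropy of the fading to reduce to channel inputs whose per-symbol covariance is a scaled identity: pre-rotating $\rw$ and $\rx$ by fixed unitary matrices leaves the laws of $\ry$ and $\rz$ unchanged by the definition of isotropy, and averaging over Haar-random rotations (disclosed to the receivers) isotropizes the inputs without changing any rate. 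This step is what forecloses beamforming and interference-alignment gains, and it is what later makes it legitimate to treat the receive-side signal dimensions of a user as statistically interchangeable.

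\emph{The genie.} Since $N_1\le N_2$, receiver~2 is the stronger of the two. The core of the converse is a genie argument that makes receiver~1 able to decode \emph{both} messages. The plan is to supply receiver~1 with a side-information sequence $G^{n}$ produced by an additional bank of virtual receive antennas fed by fresh isotropic fading, with the number of virtual antennas and their statistics chosen so that the augmented receiver observes user~2's signal (and user~1's interference) through a channel that is statistically at least as informative about $W_2$ as receiver~2's channel; the ordering $N_1\le N_2$ and isotropy of the fading are precisely what yield this statistical degradedness of receiver~2 relative to the augmented receiver~1. Fano's inequality applied to the augmented receiver then gives
\begin{equation}\label{eq:plan-fano}
 n(R_1+R_2)\;\le\;\MI{W_1,W_2;\,\ry^{n},G^{n}\mid\rH^{n}}+n\epsilon_n ,
\end{equation}
where $\rH^{n}$ collects all channel matrices over the codeword and $\epsilon_n\to 0$.

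\emph{Splitting and counting dimensions.} A chain-rule split of the right-hand side of~\eqref{eq:plan-fano} gives a \emph{MAC term} \MI{W_1,W_2;\ry^{n}\mid\rH^{n}} and a \emph{genie-cost term} \MI{W_1,W_2;G^{n}\mid\ry^{n},\rH^{n}}. The MAC term is the information into receiver~1 viewed as a two-user multiple-access channel with $M_1+M_2$ transmit and $N_1$ receive antennas, hence at most $n\min(M_1+M_2,N_1)\LOG{1+\PP}+O(n)$; because $\min(M_1+M_2,N_1)=\min(M_1,N_1)+L$, this is where $\min(M_1,N_1)$ and the offset $L$ in~\eqref{eq:new2} come from. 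For the genie-cost term the plan is to show that, once $\ry^{n}$ is known, the virtual antennas reveal only the \emph{missing} fraction of user~2's interference-limited signal: among the $\min(M_2,N_2)-L$ signal dimensions of user~2 that suffer interference at receiver~1 --- which isotropy renders statistically interchangeable, so that the channel-oblivious codebook of user~2 must spread its information evenly over them --- receiver~1 already holds $\min(M_2,N_1)-L$, and the genie supplies only the remaining $\min(M_2,N_2)-\min(M_2,N_1)$, whence
\begin{equation*}
 \MI{W_1,W_2;G^{n}\mid\ry^{n},\rH^{n}}\;\le\;\left(1-\frac{\min(M_2,N_1)-L}{\min(M_2,N_2)-L}\right)\bigl(nR_2-nL\,\LOG{1+\PP}\bigr)+o\bigl(n\,\LOG{1+\PP}\bigr),
\end{equation*}
with the convention $\tfrac{0}{0}=1$ applied to the inner fraction, so that the coefficient is read as $0$ when $\min(M_2,N_2)=L$ (which forces $\min(M_2,N_1)=L$ as well). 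Substituting both bounds into~\eqref{eq:plan-fano}, dividing by $n\,\LOG{1+\PP}$, letting $n\to\infty$ and then $\PP\to\infty$, and rearranging yields~\eqref{eq:new2}; the case $N_1\ge N_2$ follows by interchanging the two users.

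\emph{Where the difficulty lies.} Two ingredients carry the argument. The first is the genie construction itself: one must make rigorous the claim that an augmented receiver~1, equipped with the right virtual antennas, is statistically degraded-superior to receiver~2 for decoding $W_2$, \emph{for every isotropic fading law --- not merely Rayleigh --- and every coherence time $T$}; this rests on an equality-in-distribution property of the column spaces of isotropic random matrices rather than on any closed-form Gaussian computation. The second, and the genuine obstacle, is the genie-cost bound: one has to convert the \emph{achievable rate} $R_2$ into a per-dimension quantity and show that a channel-oblivious codebook cannot load more than its proportionate share of degrees of freedom onto any subset of the interchangeable signal dimensions of user~2. At finite SNR this even-spreading holds only approximately --- the MIMO channel couples the dimensions and both interference and noise are present --- so it has to be established only to first order in $\LOG{\PP}$, with care that every error term is genuinely $o(\LOG{\PP})$. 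The remaining pieces --- the cut-set bounds, the MAC bound, the chain-rule bookkeeping, and the passage to the limit --- are routine.
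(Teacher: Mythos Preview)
Your high-level strategy differs from the paper's. You propose to augment receiver~1 with a genie so that it can decode both messages, then chain-rule split into a MAC term and a genie-cost term. The paper instead bounds $R_1$ and $R_2$ \emph{separately} via Fano and combines them with the weight $\mu=\frac{\min(M_2,N_1)-L}{\min(M_2,N_2)-L}$: for $R_1$ it replaces user~1's actual input $\rw$ by an i.i.d.\ Gaussian $\rtw$ (costing only an SNR-independent constant, by a vector version of Zamir--Erez's ``Gaussian is not too bad''), which turns the interference-plus-noise at receiver~1 into a tractable Gaussian; for $R_2$ it uses Fano and data processing through $\rH_{22}\rx+\ru_2$. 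After stripping off the singular values at $O(1)$ cost, both bounds contain a term of the form $\MI{\{\rV_{j2}^\dag\rx+\rv_j\}^n;\rx^n\mid\rH^n}$, $j=1,2$, where $\rV_{j2}$ is the Haar-distributed right-singular-vector matrix of $\rH_{j2}$; the weighted combination $R_1+\mu R_2$ is then controlled by comparing these two terms.

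The place where your outline is genuinely incomplete is exactly the step you flag as ``the genuine obstacle'': the claim that a channel-oblivious codebook spreads its information evenly across interchangeable signal dimensions. You assert this holds to first order in $\log\PP$ but supply no mechanism. The paper's mechanism is a lemma, proved via the I--MMSE relation, stating that if $\rV_1,\rV_2$ are Haar-distributed $M\times K_1$ and $M\times K_2$ partial isometries with $K_1\le K_2$, then per-dimension mutual information is monotone: $\tfrac{1}{K_1}\MI{\rV_1^\dag\rx+\ru_1;\rx\mid\rV}\ge \tfrac{1}{K_2}\MI{\rV_2^\dag\rx+\ru_2;\rx\mid\rV}$, together with a conditional version. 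In Case~(c) the paper additionally \emph{aligns} $\rV_{12}$ to be the last $N_1$ columns of $\rV_{22}$ (permissible since the capacity region depends only on the output marginals) and rotates out the $M_1$-dimensional range of $\rH_{11}$ to isolate the $L=N_1-M_1$ interference-free dimensions before applying the per-dimension lemma to the remainder. Your genie route could plausibly be completed with the same I--MMSE lemma, but as written the proposal does not contain the argument that carries the theorem. A secondary issue: your degradedness claim (augmented receiver~1 dominates receiver~2 for $W_2$) requires coping with the fact that $\rH_{12}$ and $\rH_{22}$ may have different singular-value laws; the paper sidesteps this entirely by stripping amplitudes to unit gain before any comparison.
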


%\ZY{ Before deviling into the proof, we have following remarks.
%  \begin{itemize}
%\item 
The coherent time $T$ has no bearing on the DoF region. The assumption that
  all links have aligned coherent blocks in model~\eqref{eq:sys} is important, as it prohibits
  interference alignment over each coherence block.  In fact, if the direct links and cross links
  have staggered coherence blocks or different block sizes, interference alignment becomes
  possible~\cite{Jafar09BlindIA, KeWan10X}.  This is out of the scope of this paper.

%\item 
The inequalities~\eqref{eq:new1} are the single-user bounds for the two users.  As we shall
  see, $L$ can be interpreted as the maximum DoF of user~2 without having negative impact on the DoF
  of user~1. Therefore,~\eqref{eq:new2} describes the trade-off between the DoFs of the two users by
  carefully balancing the interference, after $L$ degrees of freedom are guaranteed for user~2.

%\item 
%The key inequality~\eqref{eq:new2} can be better visualized and understood by dividing following three non-overlapped situations, (see Fig.~\ref{fig:mul})
%\end{itemize}

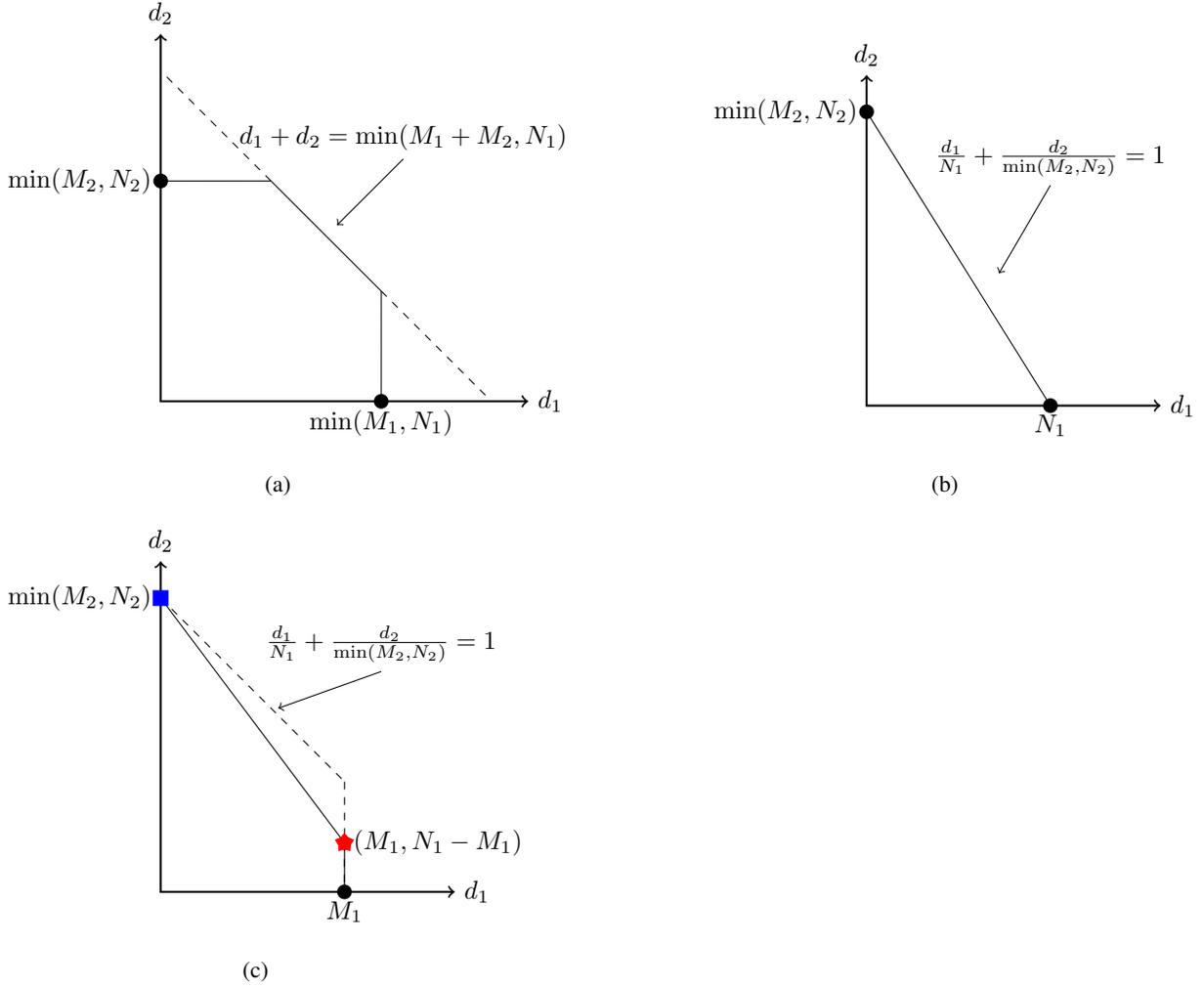
\begin{figure}
%  \centering
  \begin{small}
 \subfigure[]{
    \label{fig:mul1}
    \begin{tikzpicture}
        \draw [<->,thick] (0,5) node (yaxis) [above] {$d_2$}
        |- (5,0) node (xaxis) [right] {$d_1$};
        \coordinate  (r_2) at (0,3);
        \coordinate  (r_22) at (10,3);
        \coordinate  (r_1) at (3,0);
        \coordinate  (r_11) at (3,10);
        \coordinate  (a_2) at (0,4.5);
        \coordinate  (a_1) at (4.5,0);
        \coordinate (a) at (intersection of r_2--r_22 and a_1--a_2);
        \draw (r_2) node[left]{$\min(M_2,N_2)$}-- (a);
        \draw[dashed] (a) -- (a_2);
        \coordinate (b) at (intersection of a_1--a_2 and r_1--r_11); 
        \draw (a) -- (b);
        \draw (b)--(r_1) node[below]{$\min(M_1,N_1)$};
        \draw[dashed] (b)--(a_1);
        \fill[black] (r_2) circle (1mm);
        \fill[black] (r_1) circle (1mm);
        \draw [->] (3.3,3.3) node[above] {$d_1+d_2=\min(M_1+M_2, N_1)$}--(2.4,2.4);
      \end{tikzpicture}}
\hspace{.5cm}
  \subfigure[]{
    \label{fig:mul2}
    \begin{tikzpicture}
        \draw [<->,thick] (0,4.5) node (yaxis) [above] {$d_2$}
        |- (4,0) node (xaxis) [right] {$d_1$};
        \coordinate  (r_2) at (0,4);
        \coordinate  (r_22) at (10,5);
        \coordinate  (r_1) at (2.5,0);
        \coordinate  (r_11) at (2.5,10);
        \coordinate  (a_1) at (3,0);
       \draw (r_2) node[left]{$\min(M_2,N_2)$} -- (r_1) node[below]{$N_1$};
       \fill[black] (r_2) circle (1mm);
        \fill[black] (r_1) circle (1mm);
       \draw [->] (2.5,3) node[above] {$\frac{d_1}{N_1}+\frac{d_2}{\min(M_2,N_2)}=1$}--(1.8,1.8);
      \end{tikzpicture}}
  \subfigure[]{
    \label{fig:mul3}
    \begin{tikzpicture}
        \draw [<->,thick] (0,4.5) node (yaxis) [above] {$d_2$}
        |- (4,0) node (xaxis) [right] {$d_1$};
        \coordinate  (r_2) at (0,4);
        \coordinate  (r_22) at (10,5);
        \coordinate  (r_1) at (2.5,0);
        \coordinate  (r_11) at (2.5,10);
        \coordinate  (a_1) at (3,0);
        \coordinate (b_1) at (4,0);
        \coordinate (a) at (intersection of r_2--a_1 and r_1--r_11);
        \coordinate (b) at (intersection of r_2--b_1 and r_1--r_11);
       \draw (r_2) node[left]{$\min(M_2,N_2)$} -- (a);
        \draw (a) node[right]{$(M_1, N_1-M_1)$}--(r_1) node[below]{$M_1$};
        \draw[dashed] (r_2) -- (b);
        \draw[dashed] (b) -- (r_1);
         \draw [->] (3,3) node[above] {$\frac{d_1}{N_1}+\frac{d_2}{\min(M_2,N_2)}=1$}--(1.6,2.5);
      \node[rectangle, inner sep=1mm,draw=blue, fill=blue] at (r_2) {}; 
       \fill[black] (r_1) circle (1mm);
        \node[star, inner sep=.6mm,draw=red, fill=red] at (a) {};
     \end{tikzpicture}}
\end{small}
  \caption{DoF regions for the cases of
% $N_1 \le N_2$. 
    (a) $N_1\geq 
      M_2$, (b) $M_2>N_1$, $M_1 \geq N_1$, and (c) $M_2>N_1>M_1$. The
      outer bound developed in~\cite{ZhuGuo09Allerton,
        Huang09MIMODoFb, Vaze09IFC} agrees with the exact DoF region
      in cases (a) and (b) but is strictly looser in case (c), where the
      previous outer bound is shown using dashed lines. }
    \label{fig:mul}
\end{figure}

The achievability part of Theorem~\ref{thm:new} can be proved by
further dividing the parameter space (assuming $N_1\le N_2$ without
loss of generality) into the following three cases:
%\begin{proof}[Proof of the Achievability Part of Theorem~\ref{thm:new}]
%  We establish the achievability case-by-case. 
\begin{itemize}
\item[a)] $M_2 \le N_1$.
  In this case~\eqref{eq:new2} becomes
  \begin{align}
    d_1 + d_2 \le \min(M_1+M_2,N_1)\,.
  \end{align}
See Fig.~\ref{fig:mul1} for an illustration.  
The DoF pair $(d_1, d_2)$ falls within the intersections of the DoF
regions of two multiaccess channels~(MAC): one formed by the two
transmitters and receiver~1; and the other formed by the two
transmitters and receiver~2. Therefore, the
DoF region is achievable by letting both users employ
independent random Gaussian codebooks and transmit common
messages only.
Since $N_1 \leq N_2$, receiver~2 can
always decode the message of user~1 in the high SNR regime.   

\item[b)] $M_2 > N_1$ and $M_1\ge N_1$.
  In this case $L=0$ and~\eqref{eq:new2} becomes
  \begin{align}
    \frac{d_1}{N_1} + \frac{d_2}{\min(M_2,N_2)} \le 1\,. \label{eq:caseb}
 \end{align}
  The region becomes a triangle as shown in Fig.~\ref{fig:mul2}.
  Since for both $j=1$ and $j=2$, user $j$ can achieve the single-user
  DoF $\min(M_j, N_j)$ as long as the other user is silent.  It is
  easy to see that the DoF pairs $(N_1,0)$ and $(0,\min(M_2,N_2))$
  are achievable.  Hence the region confined by~\eqref{eq:caseb} 
 can be achieved by time sharing.  

\item[c)] $M_2 > N_1 > M_1$.
  In this case $L=N_1-M_1$ and~\eqref{eq:new2} becomes
  \begin{align}
    \frac{d_1}{M_1} +& \frac{d_2}{\min(M_2,N_2)-N_1+M_1} \nonumber\\
    &\qquad \le \frac{\min(M_2,N_2)}{\min(M_2,N_2)-N_1+M_1}
    \,. \label{eq:casec}
 \end{align}
 %By assumption in this case, 
 The capacity region becomes a trapezoid, as illustrated in
 Fig.~\ref{fig:mul3}.
 It suffices to show the corner points
 %theDoF pairs $(0,\min(M_2, N_2))$ and $(M_1, N_1-M_1)$ 
 on the dominant face of the region are achievable. Evidently, the DoF pair
    $(0,\min(M_2, N_2))$ can be achievable by activating only
    user~2. The pair $(M_1, N_1-M_1)$ is in fact within the
    intersection of DoF regions of the two MAC channels described in
    Case (a), which is evidently achievable.  
\end{itemize}
In all, the achievability part of Theorem~\ref{thm:new} has been established.
%\end{proof}
%}

Note that for Cases (a) and (b), the DoF region agrees with the previous outer bound developed in~\cite{ZhuGuo09Allerton, Huang09MIMODoFb, Vaze09IFC}. However, for Case (c), the previous outer bound is strictly loose. 

The preceding proof indicates that the DoF region can be achieved
either through time-division multi-access (TDMA) or by the
Han-Kobayashi scheme with common messages
only~\cite{Han81Interference}. It suffices to use random Gaussian
codebooks independent of the fading processes.

\section{Proof of the Converse of Theorem~\ref{thm:new}}\label{sec:proof}

We assume $N_1 \le N_2$ throughout this section. 
We adopt the following notational convention. The sequence $\rx[1], \dots, \rx[n]$ is denoted by $\rx^n$ or $\{\rx\}^n$. For simplicity, let $\rH$ denote $(\rH_{11}, \rH_{12}, \rH_{21}, \rH_{22})$ so that $\rH^n$ denotes all the channel matrices over $n$ time slots.

\subsection{Fading Statistics Revisited}
\label{s:fading}

To facilitate the proof, we shall modify the assumption on the the fading
channel matrices $\rH_{rt}$ in this section without changing the
capacity region.  Roughly speaking, isotropic fading can be decomposed
into two independent components: the ``amplitude'' and the uniformly
distributed ``phase.'' Precisely, we have the following result:
\begin{lemma}\label{lemma:decomp}
  Let $\rG(N\times M)$ be an isotropic random matrix and 
  $K=\min( M, N )$.  Let a compact singular value decomposition
  (SVD) of $\rG$ be
  $\rG = \rW \rLam \rV_1^\dag$ with $\rW (N\times K)$, $\rLam(K \times
  K)$ and $\rV_1 (M \times K)$.  Let $\rQ$ be independent of $\rG$ and 
  uniform distributed on the set of $M\times M$ unitary
  matrices: $\mathcal{Q}=\{\mQ \in \mathbb{C}^{M \times M}: \mQ^\dag\mQ=\mI_M \}$.
  Set $\rV = \rQ \rV_1$.  Then the following properties hold:
 \begin{enumerate}
 \item $\rV_1^\dag \rV_1 = \rV^\dag \rV = \rW^\dag \rW = \mI_K$, and
   $\rLam$ is diagonal with non-negative elements; 
 \item$\rV$ is independent of $(\rW, \rLam, \rV_1)$ and is uniformly
   distributed on $\mathcal{V}=\{\mV \in \mathbb{C}^{M\times K}:
   \mV^\dag\mV=\mI_K \}$; 
 \item $\rG$ and $\rW \rLam \rV^\dag$ are identically distributed,
   denoted by $\rG \sim \rW \rLam \rV^\dag$. 
 \end{enumerate}
 %Then $\rG$ and $\rW \rLam \rV^\dag$ are identically distributed, denoted by $\rG \sim \rW \rLam \rV^\dag$.
\end{lemma}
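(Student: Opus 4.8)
The plan is to derive the three properties in sequence, using only two inputs: $\rQ$ is Haar-distributed on $\mathcal{Q}$ and independent of $\rG$, and $\rG$ is isotropic. Before anything else I would fix once and for all a Borel-measurable selection of a compact SVD --- such a selection exists, and is needed because the SVD is not pathwise unique when singular values coincide or vanish --- and let $(\rW,\rLam,\rV_1)$ denote the triple it produces from $\rG$. Property~1 is then immediate from the construction of the SVD together with the observation that left-multiplying $\rV_1^\dag$ by a unitary preserves orthonormality of its columns: $\rV^\dag\rV=\rV_1^\dag\rQ^\dag\rQ\rV_1=\rV_1^\dag\rV_1=\mI_K$, while $\rW^\dag\rW=\mI_K$ and the diagonal, non-negative form of $\rLam$ come directly from the SVD.

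For Property~2 I would condition on $(\rW,\rLam,\rV_1)$. Since $\rQ$ is independent of $\rG$ and $(\rW,\rLam,\rV_1)$ is a function of $\rG$, on the event $\{\rV_1=\mV_1\}$, for any fixed $\mV_1\in\mathcal{V}$, the conditional law of $\rV=\rQ\mV_1$ equals the unconditional law of $\rQ\mV_1$. For every deterministic $\mQ\in\mathcal{Q}$, left-invariance of Haar measure gives $\mQ\rQ\overset{d}{=}\rQ$, hence $\mQ(\rQ\mV_1)\overset{d}{=}\rQ\mV_1$; thus the law of $\rQ\mV_1$ is invariant under the transitive left action $\mQ\colon\mV\mapsto\mQ\mV$ of $\mathcal{Q}$ on the Stiefel manifold $\mathcal{V}$, and therefore coincides with the unique $\mathcal{Q}$-invariant probability measure on $\mathcal{V}$, namely the uniform distribution. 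Because this conditional law does not depend on the conditioning value, $\rV$ is independent of $(\rW,\rLam,\rV_1)$ and uniform on $\mathcal{V}$, as claimed.

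Property~3 follows from the algebraic identity $\rW\rLam\rV^\dag=\rW\rLam\rV_1^\dag\rQ^\dag=\rG\rQ^\dag$, so it suffices to show $\rG\rQ^\dag\overset{d}{=}\rG$. Conditioning on $\rQ$ and using its independence from $\rG$: for every fixed unitary $\mQ$ the matrix $\mQ^\dag$ is again unitary, so isotropy gives $\rG\mQ^\dag\overset{d}{=}\rG$; integrating out $\rQ$ yields $\rG\rQ^\dag\overset{d}{=}\rG$. I expect the only genuinely delicate part to be the measurable-selection bookkeeping and the identification of ``the uniform distribution on $\mathcal{V}$'' with the unique left-invariant probability measure on this homogeneous space (uniqueness because $\mathcal{Q}$ acts transitively on $\mathcal{V}$ with compact stabilizer); everything else is routine manipulation of the SVD and the isotropy hypothesis.
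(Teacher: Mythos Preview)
Your proof is correct and follows essentially the same approach as the paper: Property~1 by definition of the compact SVD, Property~2 by conditioning on $\rV_1$ and using left-invariance of Haar measure to identify the law of $\rQ\mV_1$ as uniform on the Stiefel manifold, and Property~3 via $\rW\rLam\rV^\dag=\rG\rQ^\dag$ together with isotropy. You are in fact more careful than the paper in two respects: you flag the need for a measurable SVD selection, and you argue explicitly that the unique $\mathcal{Q}$-invariant probability on $\mathcal{V}$ exists by transitivity of the action, whereas the paper simply asserts that $\rQ\mV_1$ is uniform on $\mathcal{V}$.
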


\begin{proof}
  Property 1 is straightforward by the definition of SVD.  In
  particular, both $\rW$ and $\rV_1$ have orthogonal columns.

 Noting that conditioned on $\rV_1=\mV_1$, $\rV=\rQ\mV_1$ is 
 uniform on $\mathcal{V}$, we conclude that $\rV$ uniform distributed and independent
  of $(\rW,\rLam,\rV_1)$.  Hence Property 2 holds.

  By Definition 1, $\rG$ is identically distributed as $\rG\mQ$,
  which in turn is identically distributed as $\rG\rQ$.  Thus Property
  3 holds, i.e., $\rG\sim\rW\rLam\rV^\dag$.
\end{proof}

% Then there exists a  decomposition $\rG \sim \rW \rLam
%  \rV^\dag$ such that $\rW (N\times K)$, $\rLam(K \times K)$ and $\rV (M \times K)$ satisfy\footnote{\ZY{Notation $\sim$ denotes that right hand side is identically distributed as the left hand side.}}:

The following is a direct consequence of Lemma~\ref{lemma:decomp}:
\begin{corollary}\label{c:decomp}
  Let $(\rG,\rW,\rLam,\rV)$ be defined as in
  Lemma~\ref{lemma:decomp}.  Define block-diagonal matrices
  $\rudG=\text{diag}(\rG, \dots, \rG )$, $\rudW=\text{diag}(\rW,
  \dots, \rW)$, $\rudLam=\text{diag}(\rLam, \dots, \rLam)$ and
  $\rudV=\text{diag}(\rV, \dots, \rV)$, each with $T$ diagonal blocks.
% where $\rG (N\times M)$ is an isotropic complex random matrix and
% define $K=\min(M,N)$. Then there exits a decomposition
  Then $\rudG \sim \rudW \rudLam \rudV^\dag$.
% such that where $\rW (N\times K)$, $\rLam(K \times K)$ and $\rV (M \times K)$ satisfy the two properties in Lemma~\ref{lemma:decomp}. 
\end{corollary}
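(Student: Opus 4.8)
The plan is to reproduce the argument behind Property~3 of Lemma~\ref{lemma:decomp}, but applied to the $T$-fold replicated matrices rather than to $\rG$ itself. Introduce the map $\phi(\mA)=\diag(\mA,\dots,\mA)$ with $T$ diagonal blocks, so that $\rudG=\phi(\rG)$, $\rudW=\phi(\rW)$, $\rudLam=\phi(\rLam)$ and $\rudV=\phi(\rV)$. The single algebraic fact that makes this work is that replication intertwines with right multiplication by a replicated unitary:
\begin{align}
  \phi(\rG)\,\phi(\mQ)=\diag(\rG\mQ,\dots,\rG\mQ)=\phi(\rG\mQ),\qquad \mQ\in\mathcal{Q}. \nonumber
\end{align}
Note that $\phi(\mQ)$ is an $MT\times MT$ unitary built from $T$ \emph{identical} copies of $\mQ$, not a general block-diagonal unitary; this is consistent with $\rudV$ carrying the \emph{same} Haar matrix $\rQ$ in every block, so $\rudV$ is not uniform on the whole product set $\mathcal{V}\times\cdots\times\mathcal{V}$, and it is exactly the amount of symmetry we will be able to exploit.

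First I would use isotropy of $\rG$. For any deterministic $\mQ\in\mathcal{Q}$, $\rG\mQ\sim\rG$ by Definition~1, and since $\phi$ is a fixed deterministic map, $\phi(\rG\mQ)\sim\phi(\rG)$; by the displayed identity this reads $\rudG\,\phi(\mQ)\sim\rudG$. Now take $\rQ$ as in Lemma~\ref{lemma:decomp}: it is Haar-distributed on $\mathcal{Q}$ and independent of $\rG$, hence so is $\rQ^\dag$, because the Haar measure on $\mathcal{Q}$ is invariant under $\mQ\mapsto\mQ^\dag$. Conditioning on the value of $\rQ^\dag$ and averaging, for every measurable set $\mathcal{A}$,
\begin{align}
  \Prob{\rudG\,\phi(\rQ^\dag)\in\mathcal{A}}=\MEXP\!\big[\Prob{\rudG\,\phi(\mQ)\in\mathcal{A}}\big|_{\mQ=\rQ^\dag}\big]=\Prob{\rudG\in\mathcal{A}}, \nonumber
\end{align}
the last step because $\Prob{\rudG\,\phi(\mQ)\in\mathcal{A}}$ does not depend on $\mQ$. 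Hence $\rudG\sim\rudG\,\phi(\rQ^\dag)$.

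Finally I would identify $\rudG\,\phi(\rQ^\dag)$ with $\rudW\rudLam\rudV^\dag$. Since $\rG=\rW\rLam\rV_1^\dag$ is a compact SVD (Lemma~\ref{lemma:decomp}) and $\rV=\rQ\rV_1$,
\begin{align}
  \rudG\,\phi(\rQ^\dag)=\phi\!\big(\rG\rQ^\dag\big)=\phi\!\big(\rW\rLam(\rQ\rV_1)^\dag\big)=\phi(\rW)\,\phi(\rLam)\,\phi(\rV)^\dag=\rudW\,\rudLam\,\rudV^\dag, \nonumber
\end{align}
and combining with the previous step gives $\rudG\sim\rudW\rudLam\rudV^\dag$, as claimed. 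One may also note in passing that $\phi(\rW)\phi(\rLam)\phi(\rV_1)^\dag$ is a genuine compact SVD of $\rudG$, since $\phi(\rW)^\dag\phi(\rW)=\phi(\rW^\dag\rW)=\mI_{KT}$ and likewise for $\phi(\rV_1)$, so $\rudW$, $\rudLam$, $\rudV$ really do play the SVD roles for $\rudG$. I do not expect a genuine obstacle here; the only delicate point is the one flagged above — that the symmetry being invoked is invariance of $\rG$ under the \emph{diagonal} action of the unitary group, which is precisely what isotropy of $\rG$ provides and precisely what the replicated structure of $\rudV$ requires, so no stronger ``block isotropy'' is needed or available.
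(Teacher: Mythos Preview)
Your proof is correct and follows essentially the same approach the paper intends: the paper states the corollary as a ``direct consequence of Lemma~\ref{lemma:decomp}'' without giving a separate argument, and your proof simply replays the proof of Property~3 of that lemma through the replication map $\phi$, using the block-multiplicativity $\phi(\rG)\phi(\mQ)=\phi(\rG\mQ)$ together with isotropy of $\rG$ and independence of the Haar matrix $\rQ$. Your added remark that only diagonal (replicated) unitary invariance is needed, matching the structure of $\rudV$, is a correct and helpful clarification.
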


We remark that in general $\rV_1$ is not independent of $(\rW,\rLam)$.
By scrambling $\rV_1$ using uniformly distributed $\rQ$, we obtain
$\rV$, which is guaranteed to be uniformly distributed and independent
of $(\rW,\rLam)$ by Lemma~\ref{lemma:decomp}.

From Lemma~\ref{lemma:decomp}, we can obtain matrices
$(\rW_{rt},\rLam_{rt},\rV_{rt})$ from the compact SVD of $\rH_{rt}$,
which satisfy the three properties given in the lemma.  In particular,
$\rV_{rt}$ is uniformly distributed and independent of $\rH_{rt}$.
For every $r,t=1,2$, channel matrix $\rH_{rt}$ is identically
distributed as $\rW_{rt} \rLam_{rt} \rV_{rt}^\dag$, although they are
not equal in general.  Since the channel capacity depends only on the
statistics of the channel state, we can substitute $\rH_{rt}$ by
$\rW_{rt} \rLam_{rt} \rV_{rt}^\dag$ in model~\eqref{eq:sys} for
$t,r=1,2$ without
changing the capacity region.  This substitution allows a simple proof
of the converse part of Theorem~\ref{thm:new}.  Therefore, with slight
abuse of notation, we let the channel matrices be $\rH_{rt} = \rW_{rt}
\rLam_{rt} \rV_{rt}^\dag$ from this point onward.  Moreover, we let
the decomposition $(\rW_{rt}, \rLam_{rt}, \rV_{rt})$ be determined by $\rH_{rt}$.

\subsection{Preliminary Results}\label{sec:lemmas}

We first develop several preliminary results to facilitate the proof.
The following theorem, proved in Appendix~\ref{app:Gau}, is a simple generalization of \cite[Theorem 3]{Zamir04Gauss} to vector channels. 
\begin{theorem}[Gaussian input is not too bad] \label{thm:Gau}
Suppose that $\rw$ and $\rtw$ are two random $M$-vectors, $\mH(N\times M)$ is a full-rank
deterministic matrix, and $\rv$ is a random $N$-vector which is independent of $\rw$ and $\rtw$.  We assume that $\MEXP\|\rw\|^2\leq \PP$. Then
\begin{align} \label{eq:Gau1} 
  \MI{\mH\rw+ \rv; \rw} \leq \MI{\mH\rtw+\rv; \rtw} + \sup_{\MEXP
    \|\ra\|^2 \leq \PP } \MI{\mH\ra+ \mH\rtw; \ra}.
\end{align}
In particular, if $\rtw$ has distribution $\CN{0}{\frac{\PP}{M}\mI}$, then
\begin{align}
  \MI{\mH\rw+ \rv; \rw} \leq \MI{\mH\rtw+\rv; \rtw} +C^* \label{eq:Gau2} 
\end{align}
where 
\begin{align}
  C^* = \min(M, N)\LOG{1+\frac{M}{\min(M,N)}} \,.\label{eq:def_Cp}
\end{align}
Furthermore, for channel model~\eqref{eq:sys} and 
regarding $\rH_{21}[i]\rx[i]+\ru_1[i]=\rv[i]$,
we have
  \begin{align}
    \MI{\ry^n ; \rw^n | \rH^n} \leq \MI{
      \rty^n; \rtw^n | \rH^n } + nC^* \label{eq:Gau3}
  \end{align} 
where 
\begin{align}
   \rty[i]= \rH_{11}[i]\rtw[i] + \rH_{12}[i]\rx[i] + \ru_1[i]  \label{eq:Gau4}
\end{align}
for $i=1,\dots,n$ and $\rtw[i]\sim \CN{0}{1}$ are i.i.d.\ over time ($i=1,2,\dots$). 
\end{theorem}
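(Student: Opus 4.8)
The plan is to establish the general inequality~\eqref{eq:Gau1} first, by an auxiliary-channel argument in the spirit of~\cite{Zamir04Gauss}, then to specialize $\rtw$ to be Gaussian so that the extra term collapses to the explicit constant~\eqref{eq:def_Cp}, and finally to lift the single-letter bound to the $n$-block channel to obtain~\eqref{eq:Gau3}. For~\eqref{eq:Gau1} the key idea is to realize $\ra$, $\rtw$ and $\rv$ on one probability space as \emph{mutually independent} random vectors, with $\ra$ an independent copy of $\rw$ (this is harmless because every mutual information in~\eqref{eq:Gau1} depends only on marginal laws, and $\MEXP\|\ra\|^2=\MEXP\|\rw\|^2\le\PP$), and then to study the combined output $\mH\ra+\mH\rtw+\rv$. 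Expanding $\MI{\mH\ra+\mH\rtw+\rv;\ra,\rtw}$ by the chain rule in both orders, and noting at each conditioning step that one of $\mH\ra$, $\mH\rtw$ degenerates to a deterministic shift of the output, should make the three target quantities emerge.

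Concretely I would argue as follows. Conditioning on $\rtw$, the term $\mH\rtw$ is a constant and the remaining pair keeps its law, so $\MI{\mH\ra+\mH\rtw+\rv;\ra\mid\rtw}=\MI{\mH\ra+\rv;\ra}=\MI{\mH\rw+\rv;\rw}$; discarding the nonnegative term $\MI{\mH\ra+\mH\rtw+\rv;\rtw}$ then gives $\MI{\mH\rw+\rv;\rw}\le\MI{\mH\ra+\mH\rtw+\rv;\ra,\rtw}$. Expanding the same quantity in the other order and conditioning on $\ra$ (again a constant shift, with $(\rtw,\rv)$ retaining its joint law because $\ra$ is independent of it) gives $\MI{\mH\ra+\mH\rtw+\rv;\ra,\rtw}=\MI{\mH\ra+\mH\rtw+\rv;\ra}+\MI{\mH\rtw+\rv;\rtw}$. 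Finally, adjoining $\rv$ to the output and then conditioning it away (it is independent of $(\ra,\rtw)$), $\MI{\mH\ra+\mH\rtw+\rv;\ra}\le\MI{\mH\ra+\mH\rtw;\ra}\le\sup_{\MEXP\|\ra\|^2\le\PP}\MI{\mH\ra+\mH\rtw;\ra}$. Chaining the three displays yields~\eqref{eq:Gau1}.

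For~\eqref{eq:Gau2} it remains to bound that supremum by $C^*$ when $\rtw\sim\CN{0}{\frac{\PP}{M}\mI}$. Writing a compact SVD $\mH=\mU\mSig\mV^\dag$, with $\mU$ having orthonormal columns and $\mSig$ of size $\min(M,N)$ and invertible, the output of the channel $\mH\ra+\mH\rtw$ can be reduced, without loss of information, to $\mV^\dag\ra+\mV^\dag\rtw$, where $\mV^\dag\rtw$ is $\CN{0}{\frac{\PP}{M}\mI}$ of dimension $\min(M,N)$. If $M\le N$, $\mV$ is unitary, so $\mV^\dag\ra$ has the same norm as $\ra$; the standard Gaussian upper bound on mutual information together with the arithmetic--geometric-mean inequality (using $\trace\cov{\mV^\dag\ra}\le\PP$) then gives at most $M\log 2=C^*$. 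If $M>N$, $\mV$ has orthonormal columns only, but $\ra\text{---}\mV^\dag\ra\text{---}(\mV^\dag\ra+\mV^\dag\rtw)$ is a Markov chain, so data processing reduces the problem to dimension $N$ and yields at most $N\log(1+\frac{M}{N})=C^*$. Either way the supremum is at most $C^*$, so~\eqref{eq:Gau1} gives~\eqref{eq:Gau2}.

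For~\eqref{eq:Gau3} I would apply~\eqref{eq:Gau2}, conditioned on each realization of $\rH^n$, to the $n$-block channel $\ry^n=\mH\rw^n+\rv^n$, where $\mH=\diag(\rH_{11}[1],\dots,\rH_{11}[n])$ is block-diagonal and full rank a.s., the ``noise'' is $\rv^n=(\rH_{12}[i]\rx[i]+\ru_1[i])_{i=1}^n$, and the Gaussian test input $\rtw^n$ is i.i.d.\ over time. Two observations make this legitimate: the average-power constraint gives $\MEXP\|\rw^n\|^2\le n\PP$ even conditionally on $\rH^n$; and, because the transmitters have no CSIT, $\rw^n$ and $\rx^n$ are functions of the two independent messages alone (and $\ru_1^n$ is independent of everything), so $\rv^n$ is independent of $\rw^n$ given $\rH^n$. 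With the block matrix of size $nN_1\times nM_1$, the constant~\eqref{eq:def_Cp} evaluates to $\min(nM_1,nN_1)\log(1+\frac{nM_1}{\min(nM_1,nN_1)})=nC^*$ with $M=M_1$ and $N=N_1$; averaging the resulting inequality over $\rH^n$ gives~\eqref{eq:Gau3}. I expect the main obstacle to be precisely the bookkeeping in the proof of~\eqref{eq:Gau1}: one must set up the auxiliary joint law so that, at each conditioning step, exactly the right variable is frozen to a constant while the others keep their original joint distribution, so that the three pieces collapse to $\MI{\mH\rw+\rv;\rw}$, $\MI{\mH\rtw+\rv;\rtw}$ and the capacity-like supremum; the SVD reduction for~\eqref{eq:Gau2} and the block lift for~\eqref{eq:Gau3} should then be routine.
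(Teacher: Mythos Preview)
Your proposal is correct and follows essentially the same route as the paper: the paper invokes \cite[Lemma~1]{Zamir04Gauss} (applied to the triple $(\mH\rw,\rv,\mH\rtw)$) and then uses the sufficient-statistic identities $\MI{\mH\rw+\rv;\mH\rw}=\MI{\mH\rw+\rv;\rw}$ etc., whereas you re-derive that lemma inline via the two chain-rule expansions of $\MI{\mH\ra+\mH\rtw+\rv;\ra,\rtw}$, which is exactly its standard proof. The SVD reduction for~\eqref{eq:Gau2} (the paper uses a full SVD, you a compact one) and the block-diagonal lift plus averaging over $\rH^n$ for~\eqref{eq:Gau3} match the paper's argument.
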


The following lemma, shown in Appendix~\ref{app:amp}, puts an upper bound on the change of mutual information due to change of the amplitudes. 

\begin{lemma}\label{lemma:amp}
  Let $\rLam_1$ and $\rLam_2$ be two $M\times M$ diagonal random matrices with
  strictly positive diagonal elements almost surely. Let $\rx$ denote
  a random vector and $\ru$ a CSCG random vector with arbitrary
  covariance, both of dimension $M$. Assume that $\rx$, $\ru$ and
  $(\rLam_1,\rLam_2)$ are 
  independent. Define random matrix $\rLam_{\min}=\min(\rLam_1,
  \rLam_2)$ as the element-wise minimum. Then 
  \begin{align}
    \label{eq:amp}
    I (\rLam_2\rx &+ \ru; \rx | \rLam_2
    ) - I (\rLam_1\rx + \ru; \rx | \rLam_1) \nonumber \\
    &\leq  2\MEXP\LOG{\frac{\det \rLam_2}{\det \rLam_{\min}}} \nonumber \\
    & \leq  2\MEXP \log^+ \det \rLam_2 
+ 2\MEXP \left[
      \log^+\frac{1}{\det\rLam_{\min}} \right]
  \end{align}
  where $\log^+(x)=\log \max(1,  x )$. Evidently, if $\rLam_1$ and
  $\rLam_2$ are deterministic, the inequalities hold with all
  expectations and conditionings dropped. 
\end{lemma}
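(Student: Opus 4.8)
\emph{Overall approach.} The plan is to reduce to deterministic gain matrices and then to sandwich the two channels around the one with gain $\Lambda_{\min}$. By the independence assumptions, conditioned on $\rLam_2$ the vector $\rLam_2\rx+\ru$ is independent of $\rLam_1$, so $I(\rLam_2\rx+\ru;\rx\mid\rLam_2)=I(\rLam_2\rx+\ru;\rx\mid\rLam_1,\rLam_2)$, and likewise $I(\rLam_1\rx+\ru;\rx\mid\rLam_1)=I(\rLam_1\rx+\ru;\rx\mid\rLam_1,\rLam_2)$; hence it suffices to prove, for every fixed pair of positive diagonal matrices $\Lambda_1,\Lambda_2$ with entrywise minimum $\Lambda_{\min}$, the pointwise bound
\[
  I(\Lambda_2\rx+\ru;\rx)-I(\Lambda_1\rx+\ru;\rx)\le 2\log\frac{\det\Lambda_2}{\det\Lambda_{\min}},
\]
and then to take the expectation over $(\rLam_1,\rLam_2)$. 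The last inequality of the lemma follows at once, since $\log\frac{\det\Lambda_2}{\det\Lambda_{\min}}=\log\det\Lambda_2-\log\det\Lambda_{\min}\le\log^+\det\Lambda_2+\log^+\frac{1}{\det\Lambda_{\min}}$.

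\emph{The sandwich.} Split the left-hand side above as $\bigl[I(\Lambda_2\rx+\ru;\rx)-I(\Lambda_{\min}\rx+\ru;\rx)\bigr]+\bigl[I(\Lambda_{\min}\rx+\ru;\rx)-I(\Lambda_1\rx+\ru;\rx)\bigr]$. For the second bracket, write $\Lambda_{\min}=C\Lambda_1$ with $C$ diagonal and $0\prec C\preceq\mI$; scaling the output coordinatewise gives $C(\Lambda_1\rx+\ru)=\Lambda_{\min}\rx+C\ru$, and since $\cov{\ru}-C\cov{\ru}C\succeq0$ we may adjoin an independent CSCG $\ru''\sim\CN{0}{\cov{\ru}-C\cov{\ru}C}$ with $C\ru+\ru''\stackrel{d}{=}\ru$; then $\Lambda_{\min}\rx+\ru$ has the same joint law with $\rx$ as the degraded observation $C(\Lambda_1\rx+\ru)+\ru''$, so the data-processing inequality makes the second bracket $\le0$. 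For the first bracket, write $\Lambda_2=D\Lambda_{\min}$ with $D$ diagonal, $D\succeq\mI$; then $\Lambda_2\rx+\ru=D(\Lambda_{\min}\rx+D^{-1}\ru)$, so $h(\Lambda_2\rx+\ru)=2\log\det D+h(\Lambda_{\min}\rx+D^{-1}\ru)$. Since $D^{-1}\ru$ again has covariance dominated by $\cov{\ru}$, the same device yields $\ru\stackrel{d}{=}D^{-1}\ru+\ru''$ with $\ru''$ independent of $(\rx,\ru)$, whence $h(\Lambda_{\min}\rx+\ru)=h\bigl((\Lambda_{\min}\rx+D^{-1}\ru)+\ru''\bigr)\ge h(\Lambda_{\min}\rx+D^{-1}\ru)$, because adjoining an independent vector cannot decrease differential entropy. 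Subtracting $h(\ru)$ from $h(\Lambda_2\rx+\ru)\le2\log\det D+h(\Lambda_{\min}\rx+\ru)$ then gives $I(\Lambda_2\rx+\ru;\rx)\le2\log\det D+I(\Lambda_{\min}\rx+\ru;\rx)$, i.e.\ the first bracket is at most $2\log\det D=2\log(\det\Lambda_2/\det\Lambda_{\min})$. Adding the two brackets proves the pointwise bound, and averaging over $(\rLam_1,\rLam_2)$ completes the proof.

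\emph{Main obstacle.} The one delicate step is the ``adjoin an independent complement'' maneuver, which requires $\cov{\ru}-C\cov{\ru}C\succeq0$ and $\cov{\ru}-D^{-1}\cov{\ru}D^{-1}\succeq0$ for the diagonal contractions $C$ and $D^{-1}$; this is really the only property of $\ru$ that is used, and it is immediate when $\ru$ is white, the complement then being a diagonal covariance with independent coordinates so that the reconstructed noise is again white. An alternative, which also subsumes the monotonicity bracket, is to integrate the I-MMSE (gradient-of-mutual-information) identity along a coordinatewise-monotone path $\Lambda(t)$: when $\ru$ is white, the MMSE-matrix bound $\mE(t)\preceq\bigl(\Lambda(t)\cov{\ru}^{-1}\Lambda(t)\bigr)^{-1}$ confines the derivative to $\bigl[\,0,\ 2\frac{d}{dt}\log\det\Lambda(t)\,\bigr]$, so integrating from $\Lambda_{\min}$ to $\Lambda_1$ gives monotonicity and from $\Lambda_{\min}$ to $\Lambda_2$ gives the controlled increase; here the work goes into justifying differentiability of the mutual information along the path.
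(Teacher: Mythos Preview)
Your sandwich argument is essentially the paper's proof: reduce to deterministic gains, use data processing to pass from $\Lambda_1$ to $\Lambda_{\min}$, then bound the increase from $\Lambda_{\min}$ to $\Lambda_2$ via a Gaussian ``noise complement.'' The paper packages that last step slightly differently---it rescales $\Lambda_{\min}\rx+\ru$ to $\Lambda_2\rx+D\ru$ with $D=\Lambda_2\Lambda_{\min}^{-1}$, writes $D\ru\stackrel{d}{=}\ru+\ru'$ for an independent $\ru'\sim\CN{0}{D\mSig_{\ru}D-\mSig_{\ru}}$, and then uses the Markov chain $\markov{\rx}{\Lambda_2\rx+\ru}{\Lambda_2\rx+\ru+\ru'}$ to identify the gap as $-\MI{\Lambda_2\rx+\ru;\rx\mid\Lambda_2\rx+\ru+\ru'}\ge -\MI{\ru';\ru+\ru'}=-2\log\det D$---but this is the same noise decomposition you use, read as a Markov degradation rather than as ``adding independent noise raises differential entropy.''

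Your flagged obstacle is real, and it is exactly the same soft spot in the paper: the paper simply asserts that $D\mSig_{\ru}D-\mSig_{\ru}$ is ``evidently positive semi-definite,'' which (like your $\mSig_{\ru}-C\mSig_{\ru}C\succeq0$) fails for general $\mSig_{\ru}$---take $\mSig_{\ru}=\bigl[\begin{smallmatrix}1&1\\1&1\end{smallmatrix}\bigr]$ and $D=\mathrm{diag}(2,1)$. Both arguments are therefore complete only when $\ru$ is white, which is in fact the only way the lemma is invoked in the paper; your I--MMSE alternative is a legitimate workaround in that case as well.
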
 

\begin{lemma} \label{lemma:phase}
  Let $\rx$ be a random vector in $\mathbb{C}^M$, $\ru_j \sim \CN{0}{\mI_{K_j}}$, $j=1,2,3$, and $K_1 \leq K_2 \leq M$.  In addition, let $\rV_j$ be a random $M \times K_j$ matrix for $j=1,2,3$. Suppose that conditioned on $\rV_3=\mV_3$, $\rV_j$ is uniformly distributed on $\mathcal{V}_j=\{\mV \in \mathbb{C}^{M\times K_j}| \mV^\dag \mV = \mI_{K_j} \text{ and }\mV^\dag\mV_3=0 \}$ for $j=1,2$. Suppose also that $\rx$, $\ru_1$, $\ru_2$, $\ru_3$ and $\rV=(\rV_1, \rV_2, \rV_3)$ are mutually independent. Then
  \begin{align}
    \frac{1}{K_1}\MI{\rV_1^\dag \rx+ \ru_1; \rx \Big| \rV_3^\dag \rx + \ru_3, \rV} \geq \frac{1}{K_2}\MI{\rV_2^\dag \rx+ \ru_2; \rx \Big| \rV_3^\dag \rx + \ru_3, \rV} . \label{eq:phase1}
  \end{align}
Furthermore, suppose $(\rV_1[i], \rV_2[i], \rV_3[i])_{i=1}^n$ is
i.i.d.~following the joint distribution of $(\rV_1, \rV_2, \rV_3)$, then
\begin{align}
  \frac{1}{K_1}\MI{ \{ \rV_1^\dag \rx+ \ru_1 \}^n ; \rx^n \Big| \{ \rV_3^\dag \rx+ \ru_3 \}^n, \rV^n} \geq \frac{1}{K_2}\MI{ \{ \rV_2^\dag \rx+ \ru_2 \}^n; \rx^n \Big| \{ \rV_3^\dag \rx+ \ru_3 \}^n, \rV^n}.  \label{eq:phase2}
\end{align}
In particular, if $\rV_3\equiv 0$,~\eqref{eq:phase1} and~\eqref{eq:phase2} become
\begin{align*}
  \frac{1}{K_1}\MI{\rV_1^\dag \rx+ \ru_1; \rx \Big| \rV_1} \geq \frac{1}{K_2}\MI{\rV_2^\dag \rx+ \ru_2; \rx \Big| \rV_2}
\end{align*}
and 
%\frac{1}{K_1}\MI{ \{ \rV_1^\dag \rx+ \ru_1 \}^n ; \rx^n \Big| \rV_1^n} \geq \frac{1}{K_2}\MI{ \{ \rV_2^\dag \rx+ \ru_2 \}^n; \rx^n \Big| \rV_2^n}
\begin{align*}
 \frac{\MI{ \{ \rV_1^\dag \rx+ \ru_1 \}^n ; \rx^n \Big| \rV_1^n}}{K_1} \geq \frac{\MI{ \{ \rV_2^\dag \rx+ \ru_2 \}^n; \rx^n \Big| \rV_2^n}}{K_2}
\end{align*} 
respectively.
\end{lemma}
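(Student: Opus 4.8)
The plan is to prove the single-letter inequality~\eqref{eq:phase1} first and then lift it to the $n$-letter version~\eqref{eq:phase2} by a standard single-letterization. For~\eqref{eq:phase1}, the key structural fact I would exploit is that, conditioned on $\rV_3=\mV_3$, both $\rV_1$ and $\rV_2$ are uniformly distributed on Stiefel-type manifolds of $K_1$- and $K_2$-frames orthogonal to the column space of $\mV_3$. I would first condition on $(\rV_3,\ru_3^{\mathrm{eff}})$ where $\ru_3^{\mathrm{eff}}$ stands for the side information $\rV_3^\dag\rx+\ru_3$, so that it suffices to prove, for a fixed $\mV_3$, that $\frac{1}{K_1}\MI{\rV_1^\dag\rx+\ru_1;\rx\,|\,\rV_1}\ge \frac{1}{K_2}\MI{\rV_2^\dag\rx+\ru_2;\rx\,|\,\rV_2}$ where now $\rx$ has whatever (conditional) distribution it inherited, and $\rV_1,\rV_2$ are uniform on frames in the $(M-K_3)$-dimensional orthogonal complement. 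So the problem reduces to: a random vector observed through a uniformly random $K$-dimensional ``measurement'' subspace plus noise; show the per-dimension mutual information is nonincreasing in $K$.

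The central step is a symmetrization/averaging argument. Fix $K_2$ and realize a uniform $K_1$-frame $\rV_1$ as the first $K_1$ columns of a uniform $K_2$-frame $\rV_2$ (possible since $K_1\le K_2$ and the marginal of the first $K_1$ columns of a uniform $K_2$-frame is uniform on $K_1$-frames). Then $\MI{\rV_1^\dag\rx+\ru_1;\rx\,|\,\rV_1}$ equals the mutual information contributed by the first $K_1$ coordinates of the observation $\rV_2^\dag\rx+\ru_2$ given $\rV_2$. By the chain rule, $\MI{\rV_2^\dag\rx+\ru_2;\rx\,|\,\rV_2}=\sum_{k=1}^{K_2}\MI{(\rV_2^\dag\rx+\ru_2)_k;\rx\,|\,(\rV_2^\dag\rx+\ru_2)_{<k},\rV_2}$, and I would argue each successive term is no larger than the ``fresh'' single-coordinate term $\MI{(\rV_2^\dag\rx+\ru_2)_1;\rx\,|\,\rV_2}$ — conditioning on more observed coordinates only decreases the marginal information about $\rx$ from one more — while by exchangeability of the columns of a uniform frame, averaging the first $K_1$ of these chain-rule terms gives exactly $\frac{K_1}{K_2}$ of... no: more carefully, exchangeability of columns gives that each of the $K_2$ chain-rule terms has the same ``position-averaged'' value, and the sum of the first $K_1$ of them (which equals the $K_1$-frame information) is at least $K_1$ times the average, i.e.\ at least $\frac{K_1}{K_2}$ times the full $K_2$-frame information. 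Combining, $\MI{\rV_1^\dag\rx+\ru_1;\rx\,|\,\rV_1}\ge \frac{K_1}{K_2}\MI{\rV_2^\dag\rx+\ru_2;\rx\,|\,\rV_2}$, which is~\eqref{eq:phase1}.

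For~\eqref{eq:phase2}, I would single-letterize: with $(\rV_1[i],\rV_2[i],\rV_3[i])$ i.i.d., apply~\eqref{eq:phase1} at each time $i$ with the random vector $\rx$ replaced by the whole block $\rx^n$ and the extra side information $\rV_3[i]^\dag\rx[i]+\ru_3[i]$ augmented by everything observed at other times; since conditioning and side information are arbitrary in the single-letter statement (it holds for any distribution of $\rx$), the inequality $\frac{1}{K_1}\MI{\rV_1[i]^\dag\rx[i]+\ru_1[i];\rx^n\,|\,\cdots}\ge\frac{1}{K_2}\MI{\rV_2[i]^\dag\rx[i]+\ru_2[i];\rx^n\,|\,\cdots}$ holds term by term, and summing over $i$ with the chain rule on both sides yields~\eqref{eq:phase2}. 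The $\rV_3\equiv 0$ specializations are immediate. The main obstacle I anticipate is making the exchangeability/column-embedding argument fully rigorous while keeping track of the conditioning on $\rV_3$ and on the side information $\rV_3^\dag\rx+\ru_3$ — in particular verifying that conditioning on $\rV_3=\mV_3$ leaves $\rV_1,\rV_2$ jointly realizable as nested uniform frames in the orthogonal complement, and that the noise vectors $\ru_1,\ru_2$ (which have different dimensions) can be coupled so that the first $K_1$ coordinates of $\ru_2$ are distributed as $\ru_1$; the isotropy of CSCG noise makes this coupling possible, but it needs to be stated carefully.
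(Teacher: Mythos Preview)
Your single-letter argument for~\eqref{eq:phase1} is correct in spirit and genuinely different from the paper's proof. The paper does not use the chain rule and exchangeability; instead it invokes the I--MMSE relation $\MI{\sqrt{t}\,\rx+\ru;\rx\mid\rv}=\int_0^t\mmse{\rx;\tau\mid\rv}\,\ud\tau$ together with a combinatorial covering design: it builds $K_2$ matrices $B_1,\dots,B_{K_2}\in\mathbb{C}^{K_1\times K_2}$ (each selecting $K_1$ out of $K_2$ basis vectors) with $B_jB_j^\dag=\mI_{K_1}$ and $\frac{1}{K_1}\sum_j B_j^\dag B_j=\mI_{K_2}$, uses that restricting the observation to $B_j\rz$ can only raise the MMSE, sums over $j$, and then exploits that each $B_j\rV_2$ is distributed as $\rV_1$ given $\rV_3$. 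Your route via submodularity is more elementary: once $\rV_1$ is embedded as the first $K_1$ columns of $\rV_2$, the essential point is that $f(k)=\MI{(\rV_2^\dag\rx+\ru_2)_{1:k};\rx\mid\rV_3^\dag\rx+\ru_3,\rV}$ has nonincreasing increments --- because the coordinates $Y_k$ are conditionally independent given $(\rx,\rV)$, so $I(Y_k;\rx\mid Y_{<k},Y_{k+1},\cdot)\le I(Y_k;\rx\mid Y_{<k},\cdot)$, and column exchangeability converts this into $f(k{+}1)-f(k)\le f(k)-f(k{-}1)$. Concavity with $f(0)=0$ then gives $f(K_1)/K_1\ge f(K_2)/K_2$. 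You should state this concavity explicitly; the phrase ``each of the $K_2$ chain-rule terms has the same position-averaged value'' is not the right justification and does not by itself yield the inequality.

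Your proposed $n$-letter extension, however, has a real gap. If you expand both sides of~\eqref{eq:phase2} by the chain rule over time, the $i$-th summand on the left conditions on $\{\rV_1^\dag\rx+\ru_1\}^{<i}$ whereas the $i$-th summand on the right conditions on $\{\rV_2^\dag\rx+\ru_2\}^{<i}$; these are different side-information variables, so you cannot apply the single-letter inequality term by term with a common conditioning as you suggest. The fix is to apply your concavity argument directly at the block level: set $g(k)=\MI{\{(\rV_2^\dag\rx+\ru_2)_{1:k}\}^n;\rx^n\mid\{\rV_3^\dag\rx+\ru_3\}^n,\rV^n}$, note that the same conditional-independence-plus-column-exchangeability reasoning (now applied to the $n$-tuple of $k$-th coordinates at once) shows $g$ is concave, hence $g(K_1)/K_1\ge g(K_2)/K_2$, and the column embedding identifies $g(K_1)$ with the left side of~\eqref{eq:phase2}. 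The paper instead handles~\eqref{eq:phase2} by stacking into $nM$- and $nK_j$-dimensional vectors and rerunning the I--MMSE/design argument with block-diagonal $\overline{B}_j=\mathrm{diag}(B_j,\dots,B_j)$, observing that although the stacked $\overline{\rV}_j$ are no longer uniform on a Stiefel manifold, the distributional identity $(\overline{B}_j\overline{\rV}_2,\overline{\rV}_3)\sim(\overline{\rV}_1,\overline{\rV}_3)$ still holds.
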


Proved in Appendix~\ref{app:phase}, Lemma~\ref{lemma:phase} essentially states that the mutual information per dimension decreases with the dimensionality of the uniform transformation of the channel input. The following corollary is a simple extension of Lemma~\ref{lemma:phase} to block-diagonal matrices. 
  \begin{corollary}\label{c:phase}
    Suppose that $\rudV_1=\diag(\rV_1, \dots, \rV_1)$, $\rudV_2=\diag(\rV_2, \dots, \rV_2)$, and $\rudV_3=\diag(\rV_3, \dots, \rV_3)$ are three random block-diagonal matrices with same number of diagonal blocks, where random matrices $\rV_1$, $\rV_2$, $\rV_3$ satisfies the same conditions as in Lemma~\ref{lemma:phase}. Suppose that $\rx$ is independent random vectors and $\ru_1$, $\ru_2$, and $\ru_3$ are three white CSCG vectors with unit covariance matrices and compatible size. Then 
\begin{align*}
    \frac{1}{K_1}\MI{\rudV_1^\dag \rx+ \ru_1; \rx \Big| \rudV_3^\dag \rx + \ru_3, \rudV} \geq \frac{1}{K_2}\MI{\rudV_2^\dag \rx+ \ru_2; \rx \Big| \rudV_3^\dag \rx + \ru_3, \rudV} . 
  \end{align*}
Furthermore, suppose $(\rudV_1[i], \rudV_2[i], \rudV_3[i])_{i=1}^n$ is
i.i.d.~following the joint distribution of $(\rudV_1, \rudV_2, \rudV_3)$, then
\begin{align*}
  \frac{1}{K_1}\MI{ \{ \rudV_1^\dag \rx+ \ru_1 \}^n ; \rx^n \Big| \{ \rudV_3^\dag \rx+ \ru_3 \}^n, \rudV^n} \geq \frac{1}{K_2}\MI{ \{ \rudV_2^\dag \rx+ \ru_2 \}^n; \rx^n \Big| \{ \rudV_3^\dag \rx+ \ru_3 \}^n, \rudV^n}. 
\end{align*} 
 \end{corollary}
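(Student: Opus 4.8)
The plan is to deduce Corollary~\ref{c:phase} from Lemma~\ref{lemma:phase} by re-interpreting the block-diagonal channel as a matrix-input version of the channel in the lemma. First I would split the input as $\rx=(\rx^{(1)},\dots,\rx^{(T)})$ with each $\rx^{(b)}\in\mathbb{C}^{M}$, and likewise each noise vector as $\ru_j=(\ru_j^{(1)},\dots,\ru_j^{(T)})$, so that $\rudV_j^\dag\rx+\ru_j=(\rV_j^\dag\rx^{(b)}+\ru_j^{(b)})_{b=1}^{T}$. Collecting the blocks of $\rx$ into an $M\times T$ random matrix $\RMAT{X}$ and the blocks of $\ru_j$ into a $K_j\times T$ matrix $\RMAT{U}_j$ with i.i.d.\ $\CN{0}{1}$ entries, the output indexed by $j$ becomes $\rV_j^\dag\RMAT{X}+\RMAT{U}_j$, whose $k$-th row is $\rV_{j,k}^\dag\RMAT{X}+(\RMAT{U}_j)_{k,\cdot}$ with $\rV_{j,k}$ the $k$-th column of $\rV_j$. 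Thus Corollary~\ref{c:phase} is exactly Lemma~\ref{lemma:phase} with every scalar channel coordinate $\rV_{j,k}^\dag\rx$ replaced by the $T$-dimensional row $\rV_{j,k}^\dag\RMAT{X}$ carrying its own independent $\CN{0}{1}$ noise $T$-vector, so I would re-run the proof of~\eqref{eq:phase1} reading ``row'' for ``coordinate'' throughout.

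Concretely I would: (i) couple $\rV_1$ and $\rV_2$ (which, given $\rV_3$, are uniform on $\mathcal{V}_1$ and $\mathcal{V}_2$ with $K_1\le K_2$) so that $\rV_1$ is the submatrix formed by the first $K_1$ columns of $\rV_2$ and the first $K_1$ rows of $\RMAT{U}_2$ equal $\RMAT{U}_1$, whence the output indexed by $1$ is the first $K_1$ rows of the output indexed by $2$; (ii) writing $Y_k$ for the $k$-th row of $\rV_2^\dag\RMAT{X}+\RMAT{U}_2$, $Y_S=(Y_\ell)_{\ell\in S}$, and $C$ for the conditioning $(\rudV_3^\dag\rx+\ru_3,\rudV)$, expand $\MI{Y_1,\dots,Y_k;\rx\mid C}=\sum_{\ell\le k}\MI{Y_\ell;\rx\mid Y_{1},\dots,Y_{\ell-1},C}$; (iii) observe that because the conditional law of $\rV_2$ given $\rV_3$ is invariant under column permutations, $\MI{Y_S;\rx\mid C}$ depends on $S$ only through $|S|$, and that $\MI{Y_\ell;\rx\mid Y_S,C}\ge \MI{Y_\ell;\rx\mid Y_S,Y_{\ell'},C}$ with gap $\MI{Y_\ell;Y_{\ell'}\mid Y_S,C}\ge 0$, the equality here holding because $\MI{Y_\ell;Y_{\ell'}\mid Y_S,\rx,C}=0$ (given $\rx$ and $C$ the distinct rows $Y_\ell,Y_{\ell'}$ carry disjoint noise, hence are independent); hence $g(k):=\MI{Y_1,\dots,Y_k;\rx\mid C}$ is concave with $g(0)=0$, so $g(k)/k$ is non-increasing in $k$, and evaluating at $k=K_1$ and $k=K_2$ and undoing the coupling yields the single-letter inequality; (iv) derive the $n$-letter inequality from the single-letter one exactly as~\eqref{eq:phase2} is obtained from~\eqref{eq:phase1}, i.e.\ by the chain rule over the $n$ i.i.d.\ time indices $i$ together with the single-letter bound applied slot by slot.

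The one genuinely new point over Lemma~\ref{lemma:phase}, and the step I would be most careful about, is the exchangeability used in (iii): since $\rudV_2=\diag(\rV_2,\dots,\rV_2)$ repeats the same block, a permutation $k\mapsto\pi(k)$ of the ``row slots'' must be applied simultaneously in all $T$ coherence blocks, which is legitimate precisely because $\rudV_2\,\diag(\Pi,\dots,\Pi)=\diag(\rV_2\Pi,\dots,\rV_2\Pi)$ for the corresponding permutation matrix $\Pi$; one also has to check that the extra conditioning variable $\rudV_3^\dag\rx+\ru_3$ is left unchanged by this permutation, which holds because it involves $\rV_3$ rather than $\rV_2$. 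Once this is pinned down, the remaining steps are the bookkeeping already carried out in Lemma~\ref{lemma:phase}, which is why the result is a simple extension.
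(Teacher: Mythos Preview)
Your argument is correct and complete in substance, but it follows a genuinely different route from the paper. The paper does not give a separate proof of Corollary~\ref{c:phase}; it simply notes that the proof of Lemma~\ref{lemma:phase} carries over when each $\rV_j$ is replaced by the block-diagonal $\rudV_j$ and each $\mB_j$ by $\overline{\mB}_j=\diag(\mB_j,\dots,\mB_j)$. That proof, in turn, is analytic: it invokes the I--MMSE relation~\eqref{eq:app_max_rev1_t2}, constructs $K_2$ selection matrices $\mB_1,\dots,\mB_{K_2}$ with $\frac{1}{K_1}\sum_j \mB_j^\dag\mB_j=\mI_{K_2}$, and uses the monotonicity of MMSE under degraded observations to obtain the averaged inequality~\eqref{eq:k1k2}. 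Your proof avoids the I--MMSE machinery entirely: you couple $\rV_1$ to the first $K_1$ columns of $\rV_2$, use column-exchangeability of $\rV_2$ given $\rV_3$ to conclude that $g(k)=\MI{Y_{[k]};\rx\mid C}$ has exchangeable increments, and then show concavity of $g$ via $\MI{Y_\ell;Y_{\ell'}\mid Y_S,\rx,C}=0$. This is more elementary and arguably more transparent; the paper's approach, on the other hand, does not require the coupling step and handles the averaging in one MMSE inequality.

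One small correction to step~(iv): the paper does \emph{not} pass from~\eqref{eq:phase1} to~\eqref{eq:phase2} by a chain rule over time plus a slot-by-slot single-letter bound; it re-runs the entire argument on the stacked vectors $\overline{\rx},\overline{\rV}_j,\overline{\mB}_j$. A literal ``chain rule over $i$ plus single-letter bound'' does not go through here, because the conditioning picks up past outputs $\{A\}^{i-1}$ and the input $\rx^n$ need not be independent across $i$. Your own concavity argument, however, extends directly to the $n$-letter case exactly as you handled the $T$-block case: apply the same column permutation to every $\rV_2[i]$ simultaneously, which preserves the i.i.d.\ law of $(\rV_2[i])_{i=1}^n$ given $(\rV_3[i])_{i=1}^n$, so the rows $Y_k=(\rV_{2,k}[i]^\dag\RMAT{X}[i]+(\RMAT{U}_2[i])_{k,\cdot})_{i=1}^n$ remain exchangeable in $k$. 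With that adjustment your proof is complete.
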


The following result is proved in Appendix~\ref{app:maxentropy}. 

\begin{lemma}\label{lm:maxentropy}
Consider following two channels with $M$-vector input $\rx$ and fading matrices $\rA$ and $\rB$ 
\begin{subequations}\label{eq:lm:max:model}
  \begin{align}
    \ry &= \rA \rx + \rn_1 \\
    \rz &= \rB \rx + \rn_2
  \end{align}
\end{subequations}
where $\rn_1\sim \CN{0}{\mSig_1}$ and $\rn_2 \sim \CN{0}{\mSig_2}$ are mutually independent CSGC
noise, and matrix
$\begin{bmatrix} A \\ B \end{bmatrix}$
% $\big[ \rA^\uT\,\,\rB^\uT \big]^\uT$ 
is isotropic. We also assume that
$\MEXP\|\rx\|^2 \leq \PP$. Let $\ry_G$ and $\rz_G$ be the corresponding
outputs of model~\eqref{eq:lm:max:model} with input $\rx_G\sim \CN{0}{\frac{\PP}{M}\mI_M}$,
respectively.  Then
\begin{align}
  \MI{\ry ; \rx \big| \rz, \rA, \rB} & \leq \MI{\ry_G ; \rx_G \big| \rz_G, \rA, \rB} \label{eq:maxentropy3} \\
&= \MEXP \LOG{
    \det \left(
      \begin{bmatrix}
        \mSig_1 & 0 \\
        0 & \mSig_2
      \end{bmatrix}
+ \frac{\PP}{M} 
      \begin{bmatrix}
        \rA \\ \rB
      \end{bmatrix}
      \begin{bmatrix}
        \rA^\dag & \rB^\dag
      \end{bmatrix}
\right) } \nonumber\\
& \hspace{10em}-
\MEXP\LOG{\det \left( \mSig_2 + \frac{\PP}{M}\rB\rB^\dag \right) \det \mSig_1}.
\label{eq:maxentropy1}
\end{align}
Furthermore, if conditioned on $\rx^n$, $(\ry[i], \rz[i], \rA[i], \rB[i])_{i=1}^n$ are
i.i.d.~following the joint distribution of $(\ry, \rz, \rA, \rB)$
conditioned on $\rx$, then
\begin{align}
  \MI{\ry^n ; \rx^n | \rz^n, \rA^n, \rB^n} & \leq n\MI{\ry_G ; \rx_G |
    \rz_G, \rA, \rB}\,.
\label{eq:maxentropy2}
\end{align}
\end{lemma}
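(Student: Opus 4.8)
The plan is to prove the single-letter inequality~\eqref{eq:maxentropy3} by a ``Gaussian maximizes entropy'' argument carried out \emph{after} conditioning on the side output $\rz$, then to read off the closed form~\eqref{eq:maxentropy1}, and finally to single-letterize for~\eqref{eq:maxentropy2}; throughout I take $\mSig_1,\mSig_2\succ0$, as holds in all applications. First I would write $\MI{\ry ; \rx | \rz, \rA, \rB}=\ENT{\ry | \rz, \rA, \rB}-\ENT{\ry | \rx, \rz, \rA, \rB}$ (differential entropies). Because $\rn_1$ is independent of $(\rx,\rn_2,\rA,\rB)$ and $\rz$ is a function of $(\rx,\rn_2,\rB)$, conditioning on $(\rx,\rz,\rA,\rB)$ leaves $\rn_1\sim\CN{0}{\mSig_1}$, so $\ENT{\ry | \rx, \rz, \rA, \rB}=\log\det\left(\pi e\,\mSig_1\right)$ is a constant, independent of the input law, and the same value arises for $\rx_G$. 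Hence~\eqref{eq:maxentropy3} is equivalent to $\ENT{\ry | \rz, \rA, \rB}\le\ENT{\ry_G | \rz_G, \rA, \rB}$. Set $\mathbf{K}=\cov{\rx}$, so that $\trace\mathbf{K}\le\MEXP\|\rx\|^2\le\PP$, and recall the noise and fading are independent of $\rx$.

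To bound the left side, I would condition on $(\rz,\rA,\rB)=(z,A,B)$: then $\ry=A\rx+\rn_1$ with $\rn_1\sim\CN{0}{\mSig_1}$ still independent of $\rx$, so the maximum-entropy inequality gives $\ENT{\ry | z,A,B}\le\log\det\left(\pi e\,(\mSig_1+A\,\cov{\rx | z, A, B}\,A^\dag)\right)$. Averaging over $\rz$ and invoking concavity of $S\mapsto\log\det\left(\mSig_1+ASA^\dag\right)$ on the positive-semidefinite cone (Jensen), then the elementary fact that the conditional mean dominates every affine estimator in the matrix mean-square order --- which yields $\MEXP_{\rz}\big[\cov{\rx | \rz, \rB=B}\big]\preceq(\mathbf{K}^{-1}+B^\dag\mSig_2^{-1}B)^{-1}$, the value attained by a Gaussian input of covariance $\mathbf{K}$ (we may assume $\mathbf{K}\succ0$, the general case following by continuity) --- and monotonicity of $\log\det$, I arrive after taking $\MEXP_{\rA,\rB}$ at $\ENT{\ry | \rz, \rA, \rB}\le N\log(\pi e)+f(\mathbf{K})$, where $N$ is the length of $\ry$ and $f(\mathbf{K}):=\MEXP\log\det\left(\mSig_1+\rA(\mathbf{K}^{-1}+\rB^\dag\mSig_2^{-1}\rB)^{-1}\rA^\dag\right)$.

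It then remains to show $f(\mathbf{K})\le f(\tfrac{\PP}{M}\mI)$ when $\trace\mathbf{K}\le\PP$. I would argue that $f$ is (i) matrix-monotone increasing in $\mathbf{K}$; (ii) concave, since $\mathbf{K}\mapsto(\mathbf{K}^{-1}+\mathbf{C})^{-1}$ is operator concave for fixed $\mathbf{C}\succeq0$ (equivalently, the parallel sum is concave), which survives composition with the affine congruence $S\mapsto\mSig_1+\rA S\rA^\dag$, with $\log\det$, and with $\MEXP$; and (iii) invariant under $\mathbf{K}\mapsto\mQ\mathbf{K}\mQ^\dag$ for every deterministic unitary $\mQ$, since isotropy of the stacked matrix gives $[\rA\mQ;\rB\mQ]$ equal in law to $[\rA;\rB]$ while a short computation shows that replacing $(\rA,\rB)$ by $(\rA\mQ,\rB\mQ)$ has the same effect on the integrand as replacing $\mathbf{K}$ by $\mQ\mathbf{K}\mQ^\dag$. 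Averaging over a Haar-distributed unitary $\rQ$ then gives $f(\mathbf{K})=\MEXP_{\rQ}f(\rQ\mathbf{K}\rQ^\dag)\le f\!\left(\MEXP_{\rQ}[\rQ\mathbf{K}\rQ^\dag]\right)=f\!\left(\tfrac{\trace\mathbf{K}}{M}\mI\right)\le f\!\left(\tfrac{\PP}{M}\mI\right)$. Since, conditioned on $(\rz_G,\rA,\rB)$, $\ry_G$ is Gaussian with covariance $\mSig_1+\rA(\tfrac{M}{\PP}\mI+\rB^\dag\mSig_2^{-1}\rB)^{-1}\rA^\dag$, one has $\ENT{\ry_G | \rz_G, \rA, \rB}=N\log(\pi e)+f(\tfrac{\PP}{M}\mI)$, which proves~\eqref{eq:maxentropy3}. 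The equality~\eqref{eq:maxentropy1} I would then obtain from $\MI{\ry_G, \rz_G ; \rx_G | \rA, \rB}-\MI{\rz_G ; \rx_G | \rA, \rB}=\MI{\ry_G ; \rx_G | \rz_G, \rA, \rB}$ together with the standard Gaussian-MIMO mutual-information formulas for the stacked channel $[\rA;\rB]$ with input covariance $\tfrac{\PP}{M}\mI$ and for the $\rB$-subchannel alone.

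For the block version~\eqref{eq:maxentropy2}, the memoryless-given-$\rx^n$ hypothesis again makes $\ENT{\ry^n | \rx^n, \rz^n, \rA^n, \rB^n}=n\log\det\left(\pi e\,\mSig_1\right)$, while $\ENT{\ry^n | \rz^n, \rA^n, \rB^n}\le\sum_i\ENT{\ry[i] | \rz[i], \rA[i], \rB[i]}\le\sum_i\big(N\log(\pi e)+f(\tfrac{\trace\mathbf{K}_i}{M}\mI)\big)$ with $\mathbf{K}_i=\cov{\rx[i]}$, the last bound being the per-letter form of the single-letter estimate (each $\rx[i]$ need not meet the power constraint, but properties (i)--(iii) still give $f(\mathbf{K}_i)\le f(\tfrac{\trace\mathbf{K}_i}{M}\mI)$). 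Since $t\mapsto f(\tfrac{t}{M}\mI)$ is concave and increasing, Jensen over $i$ together with $\tfrac1n\sum_i\trace\mathbf{K}_i\le\tfrac1n\sum_i\MEXP\|\rx[i]\|^2\le\PP$ gives $\sum_i f(\tfrac{\trace\mathbf{K}_i}{M}\mI)\le nf(\tfrac{\PP}{M}\mI)$, and subtracting the two displays yields~\eqref{eq:maxentropy2}. I expect the main obstacle to be establishing properties (i)--(iii) of $f$, in particular the operator concavity of $\mathbf{K}\mapsto(\mathbf{K}^{-1}+\mathbf{C})^{-1}$, which is the one genuinely matrix-analytic ingredient; once it is in place, the Haar-averaging step and all of the entropy bookkeeping are routine.
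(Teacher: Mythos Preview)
Your proof is correct but follows a genuinely different route from the paper's. The paper first whitens $\rn_1$, then expresses the conditional mutual information via the I--MMSE integral formula $\MI{\ry';\rx\mid\rz,\rA',\rB}=\int_0^1\mmse{\rA'\rx;t\mid\rz,\rA',\rB}\,\ud t$, and bounds the integrand: for each $t$ and each realization $(A',B)$, the linear estimator attains the same error for any input and is optimal for a Gaussian input of the same covariance, so the MMSE (hence the mutual information) is maximized over inputs by a Gaussian with covariance $\mQ=\cov{\rx}$; isotropy is then used to strip the eigenvectors of $\mQ$, a permutation-averaging plus concavity-of-MMSE argument equalizes the diagonal, and a separate chain-rule computation shows that full power $\rho=\PP/M$ is optimal. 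You instead work directly with entropies: the max-entropy bound together with Jensen and the LMMSE domination $\MEXP_{\rz}\cov{\rx\mid\rz,B}\preceq(\mathbf{K}^{-1}+B^\dag\mSig_2^{-1}B)^{-1}$ reduces everything to maximizing the scalar function $f(\mathbf{K})$, after which monotonicity, operator concavity of $\mathbf{K}\mapsto(\mathbf{K}^{-1}+C)^{-1}$, and Haar averaging dispatch the optimization in one stroke. Your path avoids the I--MMSE machinery altogether and handles the eigenvector reduction, diagonal equalization, and full-power step simultaneously via the single Haar average, at the price of invoking the operator-concavity/parallel-sum fact; the paper's path is more estimation-theoretic and treats those three reductions separately, but needs no matrix-analytic lemma beyond concavity of the scalar MMSE in the input covariance. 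For the block statement, the paper simply says ``stack and repeat,'' whereas your explicit single-letterization via $\ENT{\ry^n\mid\rz^n,\rA^n,\rB^n}\le\sum_i\ENT{\ry[i]\mid\rz[i],\rA[i],\rB[i]}$ and Jensen on $t\mapsto f(\tfrac{t}{M}\mI)$ is cleaner and makes the role of the average power constraint transparent.
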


\subsection{Proof of the Converse of Theorem~\ref{thm:new} with $T=1$}\label{sec:proof_con}

We prove the converse part of Theorem~\ref{thm:new} in the case of
$T=1$ in this subsection.  
%The proof is based on the Z-interference channel outer bound and
%repeated use of Lemmas~\ref{lemma:amp} and~\ref{lemma:phase}. 
The case for general $T$ will be proved in Section~\ref{sec:generalT}.
Recall that in the channel model described in Section~\ref{sec:model},
each receiver knows only the CSI of its own incoming links.  As far as
the converse proof is concerned, we assume both receivers are provided
the CSI of all links, which can only enlarge the capacity region.

The outer bounds~\eqref{eq:new1} are trivial single-user bounds.  We
establish~\eqref{eq:new2} next.

At receiver~1, by Fano's inequality and Theorem~\ref{thm:Gau}, we have
\begin{align}
  nR_1 - \delta_n
  &\leq \MI{\ry^n; \rw^n | \rH^n}. \label{eq:FanoR1t1} \\
  &\leq \MI{\rty^n; \rtw^n | \rH^n} + nC^*
\end{align}
where $\rtw[1], \dots \rtw[n]$ denote i.i.d.~white CSCG inputs, 
$\rty$ is given by~\eqref{eq:Gau4} and $C^*$ is given in~\eqref{eq:def_Cp}.
By two different uses of the chain rule on $\MI{\rty^n ; \rx^n, \rtw^n |
  \rH^n}$, we have
\begin{align}
  \MI{\rty^n ; \rtw^n | \rH^n}
  &=\MI{\rty^n ; \rx^n | \rH^n} + \MI{\rty^n ; \rtw^n | \rx^n , \rH^n}  \nonumber\\
  &\qquad- \MI{\rty^n ; \rx^n | \rtw^n, \rH^n}     
\label{eq:CRt1}
\end{align}
where two of the terms can be further simplified:
\begin{align}
  \MI{\rty^n ; \rtw^n | \rx^n , \rH^n} & = \MI{\{\rH_{11}\rtw+\ru_1\}^n ; \rtw^n | \rH^n} \\
 & = n \MEXP \log\det\left(\mI + \frac{\PP}{M_1}\rH_{11}\rH_{11}^\dag\right) \label{eq:CRt3}
\end{align}
and
\begin{align}
 \MI{\rty^n ; \rx^n | \rtw^n, \rH^n}  & = \MI{\{\rH_{12}\rx+\ru_1\}^n
   ; \rx^n | \rH^n}\,.
 \label{eq:IHx}
\end{align}
For every $r,t=1,2$, we have compact SVD
$\rH_{rt}=\rW_{rt}\rLam_{rt}\rV_{rt}^\dag$ as described in
Section~\ref{s:fading}, where $\rW_{rt}$ and $\rV_{rt}$ 
consist of orthonormal columns.  We can write
\begin{align}
  \MI{ \{   \rH_{12}\rx+\ru_1\}^n; \rx^n| \rH^n} & = \MI{ \{   \rW_{12}\rLam_{12}\rV_{12}^\dag \rx + \ru_1\}^n ; \rx^n \Big| \rH^n  }  \nonumber \\
& = \MI{ \{   \rLam_{12}\rV_{12}^\dag \rx + \rv_1\}^n ; \rx^n \Big| \rH^n } \label{eq:R1tt2} \\
& \geq \MI{ \{   \rV_{12}^\dag \rx + \rv_1\}^n ; \rx^n \Big|  \rH^n } - n\Delta_1 \label{eq:R1t2}
\end{align}
by Lemma~\ref{lemma:amp}, where $\rv_1=\rW_{12}^\dag\ru_1 \sim
\mathcal{CN}$$(0,$ $\mI_{\min(M_2,  N_1)})$,
\begin{align*} % \label{eq:Delta1}
\Delta_1 = 2\MEXP\left[ \log^+ \frac{1}{\det(\min(\mI, \rLam_{12}))}\right]
\end{align*}
and~\eqref{eq:R1tt2} is due to the fact that given $\rH_{12}$,
$\rLam_{12}\rV_{12}^\dag\rx+\rv_1$ is a sufficient statistics of
$\rH_{12}\rx+\ru_1$ for $\rx$ (see, {\it e.g.},~\cite[Appendix
A]{Tse05book}). 
%By~\eqref{eq:R1t2},~\eqref{eq:IHx} 
Collecting the preceding bounds, we have an upper bound on the rate of user~1: 
\begin{align}
   nR_1 & - \delta_n - nC^* \nonumber \\
            & \leq n \MI{\rty^n; \rx^n | \rH^n} +
            n\MEXP \log{\det\left(\mI + \frac{\PP}{M_1}\rH_{11}\rH_{11}^\dag\right)}            \nonumber\\
            &\qquad- \MI{ \{   \rV_{12}^\dag \rx + \rv_1\}^n ; \rx^n \Big|  \rH^n }  +n\Delta_1\,.  \label{eq:R1t4}
\end{align}

An upper bound on the rate of user~2 is obtained by Fano's
inequality and the fact that $\markov{\rx}{\rH_{22}\rx + \ru_2}{\rz}$
is Markovian:
\begin{align}
  nR_2 -\delta_n & \leq \MI{ \rz^n ; \rx^n | \rH^n} \nonumber \\
  & \leq \MI{ \{   \rH_{22}\rx + \ru_2 \}^n ; \rx^n | \rH^n } \nonumber \\
&  \leq \MI{ \{   \rV_{22}^\dag \rx + \rv_2 \}^n ; \rx^n | \rH^n } + n\Delta_2 \label{eq:R2t1}
\end{align}
where~\eqref{eq:R2t1} is by Lemma~\ref{lemma:amp} with 
\begin{align*}
\Delta_2 = 2\MEXP\log^+\det\rLam_{22} + 2\MEXP\left[ \log^+
 \frac{1}{\det(\min(\mI, \rLam_{22}))}\right]
\end{align*}
and $\rv_2=\rW_{22}^\dag \ru_2$ $\sim$ $\mathcal{CN}(0,$ $\mI_{\min(M_2, N_2)})$. 

The remaining discussion is on the two bounds~\eqref{eq:R1t4}
and~\eqref{eq:R2t1}.  In view of the three cases introduced in the
achievability proof of Theorem~\ref{thm:new}: Cases (a) $M_2\le N_1$,
(b) $M_2>N_1$ and $M_1\geq N_1$, and (c) $M_2 > N_1 > M_1$, we divide
the remaining proof of the converse by two parts: The first part
investigates Cases (a) and (b) together, and the second part
investigates Case (c).

\subsubsection{Proof of Cases (a) and (b)}

In both cases, the outer bound~\eqref{eq:new2} can be written as
\begin{align}
  d_1 + \frac{\min(M_2, N_1)}{\min(M_2,N_2)}d_2 \leq \min(M_1+M_2, N_1).\label{eq:old2}
\end{align}
%in Cases (a) and (b) described in Section~\ref{sec:result}.
%In either  \eqref{eq:new2} becomes~\eqref{eq:old2}, so it suffices to
%establish~\eqref{eq:old2}.  
We give a proof of~\eqref{eq:old2} which is similar to but
much simpler than that in~\cite{ZhuGuo09Allerton}. 

The mutual information $\MI{\rty^n , \rx^n | \rH^n}$ is that of an
isotropic fading channel with no CSIT, which is maximized by
i.i.d.~Gaussian inputs: 
\begin{align}
  \MI{\rty^n , \rx^n | \rH^n} \leq n \MEXP \LOG{\frac{\det(\mI + \frac{\PP}{M_1}\rH_{11}\rH_{11}^\dag + \frac{\PP}{M_2}\rH_{12}\rH_{12}^\dag)}{\det(\mI + \frac{\PP}{M_1}\rH_{11}\rH_{11}^\dag)}}.
\end{align}
Therefore, by~\eqref{eq:R1t4}, 
\begin{align}
   nR_1 & - \delta_n - nC^* -  n\Delta_1 \nonumber \\
& \leq n \MEXP \LOG{\det(\mI + \frac{\PP}{M_1}\rH_{11}\rH_{11}^\dag + \frac{\PP}{M_2}\rH_{12}\rH_{12}^\dag)} - \MI{ \{   \rV_{12}^\dag \rx + \rv_1\}^n ; \rx^n \Big|  \rH^n }\,.\label{eq:R1Case12t5}
\end{align}

The remaining task is to determine the ratio between the two remaining
mutual information terms in~\eqref{eq:R1Case12t5}
and~\eqref{eq:R2t1}. By noting that $\rV_{22}$ is of $M_2 \times
\min(M_2, N_2)$ and $\rV_{12}$ is of $M_2 \times \min(M_2, N_1)$ and
applying Lemma~\ref{lemma:phase}, we have 
\begin{align}
  \MI{ \{   \rV_{12}^\dag \rx + \rv_1\}^n ; \rx^n \Big|  \rH^n } \geq \frac{\min(M_2, N_1)}{\min(M_2, N_2)}\MI{ \{   \rV_{22}^\dag \rx + \rv_2\}^n ; \rx^n \Big|  \rH^n }. \label{eq:compareCase12}
\end{align}
Comparing~\eqref{eq:R2t1},~\eqref{eq:R1Case12t5}
and~\eqref{eq:compareCase12} and sending $n\to\infty$, we establish 
\begin{align} \label{eq:Case12}
  R_1 + \frac{\min(M_2, N_1)}{\min(M_2, N_2)} R_2 - \Delta \leq  \MEXP \LOG{\det(\mI + \frac{\PP}{M_1}\rH_{11}\rH_{11}^\dag + \frac{\PP}{M_2}\rH_{12}\rH_{12}^\dag)} 
\end{align}
where
\begin{align}
\Delta=C^*+\Delta_1+\frac{\min(M_2, N_1)}{\min(M_2, N_2)}\Delta_2\,.  
\end{align}
The right hand side of~\eqref{eq:Case12} is the sum ergodic capacity
of the MAC formed by the two transmitters and receiver~1. 
In the high SNR regime~($\PP \to \infty$), we have
\begin{align*}
  \MEXP \LOG{\det(\mI + \frac{\PP}{M_1}\rH_{11}\rH_{11}^\dag + \frac{\PP}{M_2}\rH_{12}\rH_{12}^\dag)}  = \min(M_1+M_2, N_1) \log\PP + o(\log\PP). 
\end{align*}
Hence~\eqref{eq:old2} is established. 

\subsubsection{Proof of Case (c)}

In Case (c), $M_2 > N_1 > M_1$,~\eqref{eq:new2} becomes 
\begin{align}\label{eq:new2c}
d_1 + \mu (d_2- L)  \leq M_1
\end{align}
where $L= N_1-M_1$ and
\begin{align}
\label{eq:mu}
\mu = \frac{M_1}{\min(M_2,N_2)-L}. 
\end{align}

To establish~\eqref{eq:new2c}, we shall use some alignment techniques
developed in~\cite{Zhu09IFC}.  We first note that the capacity region
of an interference channel depends only on the marginal distributions
of the two received signals $\ry$ and $\rz$ conditioned on the inputs,
and is otherwise invariant of the joint distribution of the outputs.
Without changing the marginals of the outputs, we assume the following
alignment in the channels and noise processes between the two users:
Let $\rV_{12}(M_2\times N_1)$ consist of the last $N_1$ columns of
$\rV_{22}(M_2\times \min(M_2,N_2))$.  Let also $\rv_1=\rW_{12}^\dag\ru_1$
consist of the last $N_1$ elements of $\rv_2=\rW_{22}^\dag\ru_2$
(both are i.i.d.~Gaussian noise).
It is important to note that $\rW_{12}$ is $N_1 \times N_1$ and unitary in this case. 

Let 
\begin{align}
  \rby = \rV_{12}^\dag \rx +   \rW_{12}^\dag\rH_{11}\rtw + \rv_1\,. \label{eq:rby}
\end{align}
% it is easy to see that $L= \min(M_1+M_2, N_1)-\min(M_1, N_1)=N_1-M_1$ in Theorem~\ref{thm:new}. In the following, we establish~\eqref{eq:new2c} starting from~\eqref{eq:R1t4} and~\eqref{eq:R2t1}.
We can upper bound $\MI{\rty^n; \rx^n | \rH^n}$ in~\eqref{eq:R1t4} as
follows:\footnote{This hinges on the crucial fact that $\rW_{12}$ is
  invertible in Case (c).  Because the interference plus noise, 
 $\rH_{11}\rtw+\ru_1$, is not white,  the
 equality~\eqref{eq:R1Case3t1} does not hold in general if $\rW_{12}$
 is column-rank-deficient.}
\begin{align}
  \MI{\rty^n; \rx^n | \rH^n} 
  &= \MI{\{\rW_{12}^\dag\rty\}^n; \rx^n | \rH^n} \\
& = \MI{ \{   \rLam_{12}\rV_{12}^\dag \rx +   \rW_{12}^\dag\rH_{11}\rtw + \rv_1 \}^n ; \rx^n \Big| \rH^n }  \label{eq:R1Case3t1} \\
& \leq \MI{ \rby^n ; \rx^n \Big| \rH^n } + n\Delta_3 \label{eq:R1Case3t2}
\end{align}
%~\eqref{eq:R1Case3t1}  is due to the fact that $\rW_{12}$ is
%invertible matrix under the assumption of $M_2 > N_1 > M_1$, and
where~\eqref{eq:R1Case3t2} is due to Lemma~\ref{lemma:amp} and
\begin{align*}
\Delta_3 = 2\MEXP\log^+\det\rLam_{12} + 2\MEXP\left[ \log^+
 \frac{1}{\det(\min(\mI, \rLam_{12}))}\right]\,.
\end{align*}
Substituting~\eqref{eq:R1Case3t2} into~\eqref{eq:R1t4} and noting that $\markov{\rx}{  \rV_{12}^\dag \rx + \rv_1}{\rby}$ is Markovian, we can upper bound the rate of user~1 further:
\begin{align}
   nR_1 & - \delta_n - nC^* - n\Delta_1 - n\Delta_3\nonumber \\
           & \leq n\MEXP \LOG{\det(\mI + \frac{\PP}{M_1}\rH_{11}\rH_{11}^\dag)} - \MI{ \{   \rV_{12}^\dag \rx + \rv_1\}^n ; \rx^n \Big|  \rH^n } + \MI{\rby^n; \rx^n | \rH^n} \nonumber \\
            & = n \MEXP \LOG{\det(\mI + \frac{\PP}{M_1}\rH_{11}\rH_{11}^\dag)} - \MI{ \{   \rV_{12}^\dag \rx + \rv_1\}^n ; \rx^n \Big|  \rby^n, \rH^n }. \label{eq:R1Case3t3}
\end{align}

We can upper bound the rate of user~2 further by providing $\rby$ as side information in~\eqref{eq:R2t1}:
\begin{align}
  nR_2 -\delta_n - n\Delta_2 &\leq \MI{ \{   \rV_{22}^\dag \rx + \rv_2 \}^n, \rby^n ; \rx^n | \rH^n } \nonumber \\
 & = \MI{\rby^n ; \rx^n | \rH^n} + \MI{ \{   \rV_{22}^\dag \rx + \rv_2
   \}^n ; \rx^n \Big| \rby^n, \rH^n } \label{eq:R2case3t1}
%\\
% & \leq n\MEXP \LOG{\frac{\det(\mI + \frac{\PP}{M_2} \rW_{12}\rW_{12}^\dag + \frac{\PP}{M_1}\rH_{11}\rH_{11}^\dag)}{\det(\mI + \frac{\PP}{M_1}\rH_{11}\rH_{11}^\dag)}}  \nonumber \\
%& \hspace{10em} + \MI{ \{   \rV_{22}^\dag \rx + \rv_2 \}^n ; \rx^n \Big| \rby^n, \rH^n }
%\label{eq:R2case3t2}
\end{align}
where~\eqref{eq:R2case3t1} is due to the chain rule.
% and~\eqref{eq:R2case3t2} is due to the fact that the mutual information $\MI{\rby^n; \rx^n | \rH^n}$ is maximized by i.i.d. white CSGC inputs. 

In order to establish~\eqref{eq:new2c}, we need to identify the ratio
between the last mutual information terms in~\eqref{eq:R1Case3t3} and~\eqref{eq:R2case3t1}, namely, $\MI{ \{   \rV_{12}^\dag \rx + \rv_1\}^n ; \rx^n \Big|  \rby^n, \rH^n }$ and $\mathcal{I}\Big( \{   \rV_{22}^\dag \rx + \rv_2 \}^n ; \rx^n \Big| \rby^n, \rH^n \Big)$. 
They can roughly be interpreted as the rate loss of user~1 due to
interference and the rate gain of user~2 by causing interference to user~1, respectively. 

Suppose that we have the following result (to be proved shortly):
\begin{lemma}\label{lemma:muI}
  Let $\mu$ be given by~\eqref{eq:mu}.  As $\PP\to\infty$,
  \begin{align}
    \mu \MI{ \{   \rV_{22}^\dag \rx + \rv_2 \}^n ; \rx^n \Big| \rby^n,
      \rH^n } - \MI{ \{   \rV_{12}^\dag \rx + \rv_1\}^n ; \rx^n \Big|
      \rby^n, \rH^n } \leq  n \times o(\log\PP) \label{eq:newkey} 
  \end{align}
  where the variables are as defined in this section.
\end{lemma}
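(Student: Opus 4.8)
The plan is to strip the side information $\rby^n$ off both mutual informations in~\eqref{eq:newkey} and reduce the claim to a single application of Lemma~\ref{lemma:phase} (in its block form, Corollary~\ref{c:phase}). Split $\rV_{22}=[\rV_{22,a}\mid\rV_{12}]$, where $\rV_{22,a}$ gathers the first $K_a:=\min(M_2,N_2)-N_1$ columns of $\rV_{22}$, and accordingly $\rv_2=(\rv_{2,a},\rv_1)$; the columns of $\rV_{22,a}$ are orthonormal and orthogonal to those of $\rV_{12}$, and since $\min(M_2,N_2)-L=M_1+K_a$ one has $\mu=M_1/(M_1+K_a)$ and $1-\mu=K_a/(M_1+K_a)$. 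Denote by $A_2$ and $A_1$ the two mutual informations on the left of~\eqref{eq:newkey}. First I would observe that, conditioned on $\{\rV_{12}^\dag\rx+\rv_1\}^n$ and $\rH^n$, the process $\rby^n$ is a measurable function of the auxiliary noise $\rtw^n$ alone (recall $\rby=\rV_{12}^\dag\rx+\rW_{12}^\dag\rH_{11}\rtw+\rv_1$), hence is independent of $\bigl(\rx^n,\{\rV_{22,a}^\dag\rx+\rv_{2,a}\}^n\bigr)$. Together with the chain rule this gives $A_2=A_1+B$, where $B:=\MI{\{\rV_{22,a}^\dag\rx+\rv_{2,a}\}^n;\rx^n\mid\{\rV_{12}^\dag\rx+\rv_1\}^n,\rH^n}$, so that $\mu A_2-A_1=(M_1+K_a)^{-1}(M_1B-K_aA_1)$ and it suffices to prove $M_1B-K_aA_1\le n\cdot o(\log\PP)$.

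Second, I would extract the interference-free part of $\rby$. Let $\rW_{12}^\dag\rH_{11}=\RMAT{P}\,\rLam'\rR^\dag$ be a compact SVD, with $\RMAT{P}$ of size $N_1\times M_1$, and complete $[\RMAT{P}\mid\RMAT{P}_\perp]$ to an $N_1\times N_1$ unitary matrix, which is possible precisely because $\rW_{12}$ is itself unitary in Case~(c). Put $\rV_d:=\rV_{12}\RMAT{P}$ ($M_2\times M_1$) and $\rV_c:=\rV_{12}\RMAT{P}_\perp$ ($M_2\times L$); these have orthonormal columns, are mutually orthogonal, and jointly span the column space of $\rV_{12}$. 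Left-multiplying $\rby$ by the unitary $[\RMAT{P}\mid\RMAT{P}_\perp]^\dag$ shows that, given $\rH^n$, conditioning on $\rby^n$ is the same as conditioning on the pair $\bigl(\{\rV_c^\dag\rx+\rv_c\}^n,\{\rV_d^\dag\rx+\rv_d+\rLam'\rR^\dag\rtw\}^n\bigr)$, where $\rv_c:=\RMAT{P}_\perp^\dag\rv_1$ and $\rv_d:=\RMAT{P}^\dag\rv_1$; the key point is that this rotation turns the ``signal'' $\rV_{12}^\dag\rx+\rv_1$ into $(\rV_d^\dag\rx+\rv_d,\rV_c^\dag\rx+\rv_c)$ with the \emph{same} noise components. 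Discarding the common term $\{\rV_c^\dag\rx+\rv_c\}^n$ from the ``signal'' and stripping off the degraded term $\{\rV_d^\dag\rx+\rv_d+\rLam'\rR^\dag\rtw\}^n$ via the Markov chain $\markov{\rx}{\rV_d^\dag\rx+\rv_d}{\rV_d^\dag\rx+\rv_d+\rLam'\rR^\dag\rtw}$ (which holds given $\{\rV_c^\dag\rx+\rv_c\}^n$ and $\rH^n$, as $\rtw^n$ is independent of everything else), I get $A_1=C-D$ with
\begin{align*}
  C &:=\MI{\{\rV_d^\dag\rx+\rv_d\}^n;\rx^n\mid\{\rV_c^\dag\rx+\rv_c\}^n,\rH^n}, \\
  D &:=\MI{\{\rV_d^\dag\rx+\rv_d+\rLam'\rR^\dag\rtw\}^n;\rx^n\mid\{\rV_c^\dag\rx+\rv_c\}^n,\rH^n}.
\end{align*}
The quantity $D$ is the mutual information of an $M_1$-dimensional Gaussian channel whose noise covariance $\mI_{M_1}+\frac{\PP}{M_1}\rR(\rLam')^2\rR^\dag$ grows like $\PP$ in every direction while the input power is at most $\PP$; such a channel carries no degrees of freedom, so $D=n\cdot o(\log\PP)$ (a max-entropy bound, e.g.\ via Lemma~\ref{lm:maxentropy}; in fact $D=n\cdot O(1)$ provided $\MEXP\log^+\!\bigl(1/\sigma_{\min}(\rH_{11})^2\bigr)<\infty$). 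Hence $A_1\ge C-n\cdot o(\log\PP)$.

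Third, I rewrite $B$ in the same rotated basis. Since $\{\rV_{12}^\dag\rx+\rv_1\}^n$ and $\bigl\{(\rV_d^\dag\rx+\rv_d,\rV_c^\dag\rx+\rv_c)\bigr\}^n$ determine each other given $\rH^n$, one has $B=\MI{\{\rV_{22,a}^\dag\rx+\rv_{2,a}\}^n;\rx^n\mid\{\rV_d^\dag\rx+\rv_d\}^n,\{\rV_c^\dag\rx+\rv_c\}^n,\rH^n}$, and the chain rule yields $C+B=\MI{\{[\rV_{22,a}\mid\rV_d]^\dag\rx+(\rv_{2,a},\rv_d)\}^n;\rx^n\mid\{\rV_c^\dag\rx+\rv_c\}^n,\rH^n}$. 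The matrices $\rV_{22,a}$ and $\rV_d$ have orthonormal columns orthogonal to $\rV_c$; and since $\rH_{11}$ is independent of $\rV_{22}$ (hence of $\rV_{12}$), the rotation $[\RMAT{P}\mid\RMAT{P}_\perp]$ is independent of $\rV_{12}$, so after the alignment of Section~\ref{sec:proof_con} the triple $(\rV_c,\rV_d,[\rV_{22,a}\mid\rV_d])$ inherits the Haar-type conditional structure required by Lemma~\ref{lemma:phase}/Corollary~\ref{c:phase} with $\rV_3=\rV_c$. Applying it to compare the $M_1$-column matrix $\rV_d$ with the $(M_1+K_a)$-column matrix $[\rV_{22,a}\mid\rV_d]$ gives $\frac{1}{M_1}C\ge\frac{1}{M_1+K_a}(C+B)$, i.e.\ $M_1B\le K_aC$. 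Combined with $C\le A_1+n\cdot o(\log\PP)$ from the second step, $M_1B-K_aA_1\le K_a\,n\cdot o(\log\PP)=n\cdot o(\log\PP)$, and therefore $\mu A_2-A_1\le n\cdot o(\log\PP)$, which is~\eqref{eq:newkey}.

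The step I expect to be the main obstacle is the distributional bookkeeping supporting the third step: verifying that after (i) the alignment of Section~\ref{sec:proof_con} (which ties $\rV_{12}$ to $\rV_{22}$ and $\rv_1$ to $\rv_2$) and (ii) the further rotation by the SVD factor of $\rW_{12}^\dag\rH_{11}$, the matrices $\rV_c$, $\rV_d$, $[\rV_{22,a}\mid\rV_d]$ still satisfy the exact uniformity and conditional-independence hypotheses of Lemma~\ref{lemma:phase} — in particular that the noise vectors shared between the ``signal'' and the side information may be replaced by independent copies inside each individual mutual information without changing its value. A secondary technical point is the bound $D=n\cdot o(\log\PP)$, which needs a mild integrability condition on $\sigma_{\min}(\rH_{11})$; this is automatic for Rayleigh fading (since $N_1>M_1$ in Case~(c)) and otherwise falls within the regularity already implicit in the finiteness of the $\Delta$-terms.
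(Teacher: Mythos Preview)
Your proposal is correct and follows essentially the same route as the paper's own proof: both decompose $\rby$ into an $L$-dimensional interference-free component and an $M_1$-dimensional interference-contaminated component, show the latter contributes only $n\cdot o(\log\PP)$, and finish with a single application of Lemma~\ref{lemma:phase} with $K_1=M_1$, $K_2=\min(M_2,N_2)-L$, and the $L$-column interference-free block playing the role of $\rV_3$. The differences are purely organizational: the paper first unfolds the conditioning on $\rby^n$ via the Markov chain $\rx\text{---}\rV_{22}^\dag\rx+\rv_2\text{---}\rV_{12}^\dag\rx+\rv_1\text{---}\rby$ (eq.~\eqref{eq:newkeyT0}) and then re-assembles, whereas you peel off the extra $K_a$ columns of $\rV_{22}$ first and work with the identity $A_1=C-D$; the paper rotates by $\rW^\dag\rW_{12}$ and then switches back (via equidistribution) to the unrotated $\rV_{12,R}$, whereas you rotate by $[\RMAT{P}\mid\RMAT{P}_\perp]^\dag$ and stay in that basis throughout. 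Your distributional bookkeeping in Step~3 is fine: since $[\rV_{22,a}\mid\rV_d\mid\rV_c]=\rV_{22}\,\mathrm{diag}(I_{K_a},[\RMAT{P}\mid\RMAT{P}_\perp])$ and $[\RMAT{P}\mid\RMAT{P}_\perp]$ depends only on $(\rH_{11},\rW_{12})$, which is independent of $\rV_{22}$, the triple is still Haar on the Stiefel manifold, so the hypotheses of Lemma~\ref{lemma:phase} are met. Your secondary concern about $D$ is also handled the same way the paper handles its analogue (eq.~\eqref{eq:newkeyT5}); in fact $o(\log\PP)$ follows from full rank a.s.\ alone by dominated convergence, without the stronger integrability you mention.
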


Comparing~\eqref{eq:newkey} with~\eqref{eq:R1Case3t3} and~\eqref{eq:R2case3t1} and sending $n\to \infty$, we have
\begin{align}
  R_1 + &\mu R_2 - (1+\mu) \delta_n - \Delta - o(\log\PP) \nonumber \\
& \leq \MEXP \LOG{\det(\mI + \frac{\PP}{M_1}\rH_{11}\rH_{11}^\dag)} +
\frac{\mu}n \MI{\rby^n ; \rx^n | \rH^n}  \\
& \leq \MEXP \LOG{\det(\mI + \frac{\PP}{M_1}\rH_{11}\rH_{11}^\dag)} + \mu \MEXP \LOG{\frac{\det(\mI + \frac{\PP}{M_2} \rW_{12}\rW_{12}^\dag + \frac{\PP}{M_1}\rH_{11}\rH_{11}^\dag)}{\det(\mI + \frac{\PP}{M_1}\rH_{11}\rH_{11}^\dag)}} \label{eq:Rates}
\end{align}
where $\Delta=C^*-\Delta_1-\Delta_3-\mu\Delta_2$
and~\eqref{eq:Rates} is due to the fact that the mutual information
$\MI{\rby^n; \rx^n | \rH^n}$ is maximized by i.i.d.\ CSGC inputs.
Consider the approximation in the high-SNR regime~\cite{Tse05book}: 
\begin{align*}
  \MEXP \LOG{\det(\mI + \frac{\PP}{M_1}\rH_{11}\rH_{11}^\dag)} & = \min(M_1, N_1) \log\PP + o(\log \PP) \\
 \MEXP \LOG{\det(\mI + \frac{\PP}{M_2}\rW_{12}\rW_{12}^\dag + \frac{\PP}{M_1}\rH_{11}\rH_{11}^\dag)} & = \min(M_1+M_2, N_1) \log \PP + o(\log \PP)\,.
\end{align*}
Dividing both sides of~\eqref{eq:Rates} by $\LOG{1+\PP}$ and letting $\PP \to \infty$, we obtain
\begin{align*}
  d_1 + \mu d_2 &\leq \min(M_1, N_1) + \mu \big[ \min(M_1+M_2, N_1) - \min(M_1, N_1) \big] 
\end{align*}
which reduces to~\eqref{eq:new2c} under the assumption of $M_2 > N_1 > M_1$.

% The remaining task is to  verify that~\eqref{eq:newkey} holds.  
The remaining task is to verify that~\eqref{eq:newkey} holds.  

\begin{IEEEproof}[Proof of Lemma~\ref{lemma:muI}]
By noting that
$\rx$---$\rV_{22}^\dag\rx+\rv_2$---$\rV_{21}^\dag\rx+\rv_1$---$\rby$
is a Markov chain (due to the alignment), we have
\begin{align}
  \mu &\MI{ \{ \rV_{22}^\dag \rx + \rv_2\}^n ; \rx^n \Big| \rby^n, \rH^n } - \MI{ \{\rV_{12}^\dag \rx + \rv_1\}^n ; \rx^n \Big|  \rby^n, \rH^n }   \nonumber \\
 & = \mu  \MI{ \{\rV_{22}^\dag \rx + \rv_2\}^n  ; \rx^n \Big| \rH } - \MI{ \{\rV_{12}^\dag \rx + \rv_1\}^n ; \rx^n \Big| \rH^n} + (1-\mu) \MI{\rby^n ; \rx^n \Big| \rH^n}. \label{eq:newkeyT0}
\end{align}

 Intuitively, the interference in signal $\rby$ caused by $\rH_{11}\rtw$ is much stronger than noise in high SNR regime. However, since $N_1 > M_1$, the interference $\rH_{11}\rtw$ only occupies an $M_1$-dimension subspace. We want to show that this subspace, which contributes no DoF, can be isolated from the $N_1$-dimension received signal space so that the remaining $(N_1-M_1)$-dimension subspace can be used by user~2 without interference.   

%Note that in the definition of $\rby$ in~\eqref{eq:rby}, 
Conditioned on $\rH$, $\rH_{11}\rtw\sim
\CN{0}{\frac{\PP}{M_1}\rH_{11}\rH_{11}^\dag}$  in~\eqref{eq:rby}
is a Gaussian random vector.
%independent all other random vectors or matrices
Consider the compact SVD $\rH_{11}=\rW_{11}\rLam_{11}\rV_{11}^\dag$,
where $\rLam_{11}$ is an $M_1 \times M_1$ diagonal matrix,
whose diagonal elements are strictly positive with probability 1. 
We can append orthogonal columns to $\rW_{11}$ to form a unitary
matrix $\rW = [\rW_{11}, \widetilde{\rW}_{11}]$.  Evidently,
\begin{comment}
\begin{align}  \label{eq:2}
\rH_{11}\rtw = \rW
\begin{bmatrix}
 \rLam_{11} \\
 0
\end{bmatrix}
\rV_{11}^\dag \rtw \,.
\end{align}   
\end{comment}
the term $\rH_{11}\rtw$ in~\eqref{eq:rby} can be rewritten as
\begin{align}  \label{eq:2}
%\rW_{12}^\dag
 \rH_{11}\rtw =
%\rW_{12}^\dag 
\rW
\begin{bmatrix}
\rLam_{11} \\
0
\end{bmatrix}
\rV_{11}^\dag \rtw\,.
\end{align}
Let us define
\begin{align}  \label{eq:4}
  \rtV_{12}^\dag &= \rW^\dag\rW_{12}\rV_{12}^\dag \\
  \rtv_1 &= \rW^\dag\rW_{12} \rv_1
\end{align}
where $\rtv_1\sim\CN{0}{\mI}$ is independent of $(\rW,\rW_{12})$.
\begin{comment}
Consider the eigenvalue decomposition $\rH_{11}\rH_{11}^\dag = \rU\rSig\rU^\dag$, where $\rU$ is an $N_1 \times N_1$ random unitary matrix, and 
\begin{align}
\rSig =
  \begin{bmatrix}
    \rSig_{11} & 0 \\
    0 & 0
  \end{bmatrix} \nonumber
\end{align}
where $\rSig_{11}$ is $M_1 \times M_1$ diagonal matrix,
Let $\rtw' = [ \rtw, w'_1, \dots, w'_{N_1-M_1}]^\uT$, where $w'_k \sim
\CN{0}{\frac{\PP}{M_1}}$ are independent random variables.  We can replace $\rH_{11}\rtw$ in~\eqref{eq:rby} with $\rU\rSig^{1/2} \rtw'$ without changing any related statistics.

We can then write
\begin{align}
  \MI{\rby^n ; \rx^n \Big| \rH^n} 
 &= \MI{ \{ \rV_{12}^\dag\rx + \rv_1 + \rW_{12}^\dag\rH_{11}\rtw \}^n; \rx^n \Big| \rH^n}. 
%  &= \MI{ \{ \rV_{12}^\dag\rx + \rv_1 + \rSig^{1/2}\rtw' \}^n; \rx^n \Big| \rH^n}. 
\end{align}
Because $\rV_{12}$ is uniformly distributed on its domain and independent of $\rU^\dag \rW_{12}$ due to Lemma~\ref{lemma:decomp}, and $\rv_1$ is white CSCG and independent of $\rU^\dag\rW_{12}$, we can see that $\rU^\dag\rW_{12} (\rV_{12}^\dag\rx + \rv_1)$ is independent of $\rU^\dag\rW_{12}$ and identically distributed as $\rV_{12}^\dag \rx + \rv_1$. Thus
\begin{align}
  \MI{\rby^n ; \rx^n \Big| \rH^n}  &= \MI{ \{ \rV_{12}^\dag \rx+ \rv_1 + \rSig^{1/2}\rtw'\}^n; \rx^n \Big| \rH^n} \label{eq:newkeyT2}
\end{align}
\end{comment}
Furthermore, the $N_1\times M_2$ matrix $\rtV_{12}$ can be expressed in terms of its
sub-matrices as
$\rtV_{12} = \left[ \rtV_{12,L} \,,\, \rtV_{12,R} \right]$, where
$\rtV_{12,L}$ consists of first $M_1$ columns and $\rtV_{12,R}$ consists
of the remaining $N_1-M_1$ columns. Also, let $\rtv_{1,u}$ consist of the
first $M_1$ elements in $\rtv_1$ and $\rtv_{1,d}$ consist of the remaining
$N_1-M_1$ elements.
We have
\begin{align}
   \MI{\rby^n ; \rx^n \Big| \rH^n} 
  &= \MI{ \{\rW^\dag\rW_{12} \rby\}^n ; \rx^n \Big| \rH^n}  \\
  &= \MI{ \Bigg \{ \rtV_{12}^\dag\rx + \rtv_1 + %\rW_{12}^\dag\rH_{11}\rtw
    \begin{bmatrix}
      \rLam_{11} \rV_{11}^\dag \rtw\\
      0
    \end{bmatrix}
     \Bigg\}^n; \rx^n \Bigg| \rH^n} \\
   &=\MI{ \{\rtV_{12,L}^\dag\rx + \rtv_{1,u} +
     \rLam_{11}\rV_{11}^\dag\rtw\}^n , \{\rtV_{12,R}^\dag\rx +
     \rtv_{1,d} \}^n; \rx^n \Big| \rH^n } \nonumber \\ 
   & = \MI{ \{ \rtV_{12,R}^\dag\rx + \rtv_{1,d} \}^n; \rx^n \Big| \rH^n} \nonumber \\
   & \qquad + \MI{ \{ \rtV_{12,L}^\dag\rx + \rtv_{1,u} +
    \rLam_{11}\rV_{11}^\dag\rtw \}^n ; \rx^n \Big| \{ \rtV_{12,R}^\dag\rx +
     \rtv_{1,d} \}^n, \rH^n} \label{eq:newkeyT3}
 \end{align}
where~\eqref{eq:newkeyT3} is due to the chain rule.
We next invoke Lemma~\ref{lm:maxentropy} on the conditional mutual
information in~\eqref{eq:newkeyT3} with $\rA=\rtV_{12,L}^\dag$,
$\rB=\rtV_{12,R}^\dag$, and the noise covariance matrices
\begin{align*}
  \rSig_1=\cov{\rtv_{1,u} + \rLam_{11}\rV_{11}^\dag\rtw} = \mI +
  \frac{\gamma}{M} \rLam_{11}^2
\end{align*}
and $\rSig_2=\mI$.  As a result,~\eqref{eq:newkeyT3} is upper bounded:
\begin{align}
  \MI{\rby^n ; \rx^n \Big| \rH^n} 
 & \leq \MI{ \{ \rtV_{12,R}^\dag\rx + \rtv_{1,d} \}^n ; \rx^n \Big|
    \rH^n} - n\MEXP \LOG{ \det \left( \frac{\PP}{M}\mI+\mI \right)
    \det\left( \mI + \frac{\PP}{M}\rLam^2_{11} \right)} \nonumber \\ 
  & \qquad + n\MEXP\LOG{\det\left(\frac{\PP}{M}\rLam^2_{11}+\mI +
      \frac{\PP}{M}\mI \right)\det \left( \frac{\PP}{M}\mI+\mI
    \right)}\label{eq:newkeyT4} \\ 
& = \MI{ \{ \rtV_{12,R}^\dag\rx + \rtv_{1,d} \}^n; \rx^n \Big| \rH^n} + 
n\MEXP\log{\det\left( \mI + \left( \rLam^2_{11} +\frac{M}{\PP} \mI \right)^{-1}\right)} \nonumber \\
& = \MI{ \{ \rtV_{12,R}^\dag\rx + \rtv_{1,d} \}^n; \rx^n \Big| \rH^n} + n\times o(\log\PP)\,. \label{eq:newkeyT5}
\end{align} 
Let us also define $\rV_{12} = [ \rV_{12,L}, \rV_{12,R} ]$ where
$\rV_{12,L}$ consists of the first $M_1$ columns.  Then $\rtV_{12,R}$
and $\rV_{12,R}$ are identically distributed.  The upper
bound~\eqref{eq:newkeyT5} can thus be rewritten as
\begin{align}
  \MI{\rby^n ; \rx^n \Big| \rH^n} 
  \le \MI{ \{ \rV_{12,R}^\dag\rx + \rv_{1,d} \}^n; \rx^n \Big| \rH^n} + n\times o(\log\PP)\,. \label{eq:newkeyT52}
\end{align}
where $\rv_{1,d}$ consists of the first $M_1$ elements of $\rv_1$ and
is identically distributed as $\rtv_{1,d}$.

Substituting~\eqref{eq:newkeyT52} into~\eqref{eq:newkeyT0}, it
suffices to show the following inequality in order to
establish~\eqref{eq:newkey}:
\begin{multline}
 \mu  \MI{ \{ \rV_{22}^\dag \rx + \rv_2\}^n  ; \rx^n \Big| \rH^n} - \MI{ \{ \rV_{12}^\dag \rx + \rv_1\}^n ; \rx^n \Big| \rH^n} \\
 + (1-\mu) \MI{ \{ \rV_{12,R}^\dag\rx + \rv_{1,d} \}^n; \rx^n \Big| \rH^n} 
\leq 0. \label{eq:newkeyTT0}
\end{multline}
Recall that $\rV_{12}(M_2\times N_1)$ consists of the last $N_1$
columns of $\rV_{22}(M_2\times\min(M_2,N_2)$ due to the assumed alignment. 
Hence $\rV_{22}$ contains all the $N_1-M_1$ columns of $\rV_{12,R}$
and we can write $\rV_{22} = [\rV_{22, L}\, \rV_{12,R}]$, where
$\rV_{22,L}$ consists of the first $p =\min(M_2, N_2)-(N_1-M_1)$
columns of $\rV_{22}$.  

Furthermore, the first $p$ elements in $\rv_{2}$ as $\rv_{2,u}$. The
remaining part of $\rv_{2}$ is $\rv_{1,d}$ due to the alignment assumption. 
Therefore, the left hand side of~\eqref{eq:newkeyTT0} is equal to 
\begin{align}
 & \mu \MI{ \{ \rV_{22}^\dag \rx + \rv_2\}^n  ; \rx^n \Big| \{ \rV_{12,R}^\dag\rx + \rv_{1,d} \}^n, \rH^n} - \MI{ \{ \rV_{12}^\dag \rx + \rv_1\}^n ; \rx^n \Big| \{ \rV_{12,R}^\dag\rx + \rv_{1,d} \}^n, \rH^n} \nonumber \\
& = \mu \MI{ \{ \rV_{22,L}^\dag \rx + \rv_{2,u} \}^n ; \rx^n \Big| \{\rV_{12,R}^\dag\rx + \rv_{1,d}\}^n , \rH^n} \nonumber \\
& \hspace{5em}- \MI{ \{ \rV_{12,L}^\dag \rx + \rv_{1,u}\}^n ; \rx^n \Big| \{\rV_{12,R}^\dag\rx + \rv_{1,d} \}^n, \rH^n}. \label{eq:newkeyTTT0}
\end{align}
Note that $\mu=M_1/p$ and $(\rV_{12,L}, \rV_{22,L}, \rV_{12,R})$ satisfy the conditions of $(\rV_1, \rV_2, \rV_3)$ in Lemma~\ref{lemma:phase}. That is, conditioned on $\rV_{12,R}$, the matrices $\rV_{12,L}$ and $\rV_{22, L}$ are uniformly distributed in the respective subspaces orthogonal to $\rV_{12,R}$.
Therefore,~\eqref{eq:newkeyTT0} follows by applying
Lemma~\ref{lemma:phase} to~\eqref{eq:newkeyTTT0}.
Thus~\eqref{eq:newkey} is established and so is
Theorem~\ref{thm:new}. 
\end{IEEEproof}

\subsection{Proof of the Converse of Theorem~\ref{thm:new} with general $T$}\label{sec:generalT}

The proof of the general case with coherence time $T$ is similar to
that of the special i.i.d.\ case ($T=1$).  Without loss of generality,
we consider the time period from $1$ to $nT$.  By stacking the
transmitted signals and noise terms at time slots
$i=(j-1)T+1,\dots,jT$ into longer vectors $\rudw[j]$, $\rudx[j]$,
$\rudu_1[j]$, and $\rudu_2[j]$, respectively, for 
$j=1,\dots, n$, The model~\eqref{eq:sys} with coherent time $T$ can be
rewritten as 
\begin{subequations}\label{eq:sys2}
  \begin{align}
    \rudy[j] &=   \rudH_{11}[j]\rudw[j] +   \rudH_{12}[j]\rudx[j] + \rudu_1[j] \\
    \rudz[j] &=   \rudH_{21}[i]\rudw[j] +   \rudH_{22}[j]\rudx[j] + \rudu_2[j] 
  \end{align}
\end{subequations}
for $j=1,\dots,n$, where for every $(r,t,j)$, $\rudH_{rt}[j]$ is an
independent block diagonal matrix with
identical diagonal blocks, i.e., $\rudH_{rt}[j]
=\text{diag}(\rH_{rt}[jT], \dots, \rH_{rt}[jT])$.

Therefore, the general case can be shown by using the equivalent channel~\eqref{eq:sys2} and following the exact same steps of the proof for case of $T=1$, where application of Lemmas~\ref{lemma:decomp} and~\ref{lemma:phase} should be replaced by the corresponding corollaries~\ref{c:decomp} and~\ref{c:phase}. The DoF region turns out to be identical as that of the case of $T=1$.

\section{Concluding Remarks}~\label{sec:con}

We have fully characterized the degree-of-freedom region of the two-user isotropic
fading MIMO interference channels without channel state information at
transmitters. In particular, we show that two users can use
independent Gaussian single-user codebooks to achieve the entire DoF
region. This suggests structured signaling schemes such as beamforming
and interference alignment cannot provide additional gains in the
high-SNR regime, although the exact capacity region remains open.  

Our result only applies to two-user interference channels with i.i.d.\
block fading, where the physical links have the same coherent time and
aligned coherence blocks. Without CSI at transmitters, interference
alignment might still provide additional gain beyond this particular
channel model.  For example, in~\cite{Jafar09BlindIA}, the author
shows that for channel with antenna configuration $(M_1, N_1, M_2,
N_2)=(1,2,3,4)$, as depicted in Fig.~\ref{fig:mimo}, if the coherent
times of receiver~1's direct link and cross link are different (say, 1
and 2, respectively), the DoF pair $(1,1.5)$ can be achieved through
interference alignment, while this DoF pair is excluded from the
region developed in Theorem~\ref{thm:new}.
\begin{comment}
Possible practical
implication drawn from studies of herein and~\cite{Jafar09BlindIA} is
that the symmetry of coherent times should be avoid, which can be
achieved through carefully interleaving or new antenna
technology. An initial study along this direction is found
in~\cite{Wang10BlindIA}.   
\end{comment}

%\appendices
\appendix

\subsection{Proof of Theorem~\ref{thm:Gau}} \label{app:Gau}

The follow result is shown in~\cite{Zamir04Gauss}:
\begin{lemma}[{\cite[Lemma~1]{Zamir04Gauss}}]
  Let $(\boldsymbol{u},\boldsymbol{v},\boldsymbol{w})$ be any real- or
  discrete-valued mutually independent random variables.  Then 
  \begin{align}    \label{eq:ZE04}
    \MI{\boldsymbol{w}+\boldsymbol{v};\boldsymbol{w}} \le 
    \MI{\boldsymbol{w}+\boldsymbol{u};\boldsymbol{w}}
    + \MI{\boldsymbol{u}+\boldsymbol{v};\boldsymbol{u}}\,.
 \end{align}
\end{lemma}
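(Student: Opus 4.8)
The plan is to funnel all three variables into the single sum $\boldsymbol{w}+\boldsymbol{u}+\boldsymbol{v}$: conditioning that sum on $\boldsymbol{u}$ (respectively on $\boldsymbol{w}$) collapses it, up to a deterministic shift, back to $\boldsymbol{w}+\boldsymbol{v}$ (respectively $\boldsymbol{u}+\boldsymbol{v}$), and the chain rule, the data-processing inequality, and nonnegativity of mutual information then tie the pieces together. Throughout, mutual information and conditional mutual information are understood in the general sense (possibly $+\infty$); if the right-hand side of~\eqref{eq:ZE04} is infinite there is nothing to prove, so I may assume all the quantities below are finite and manipulate them freely.

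The first step is to record two collapse identities. Since $\boldsymbol{u}$ is independent of $(\boldsymbol{w},\boldsymbol{v})$, conditioning on $\boldsymbol{u}=u_0$ leaves the joint law of $(\boldsymbol{w},\boldsymbol{v})$ unchanged and replaces $\boldsymbol{w}+\boldsymbol{u}+\boldsymbol{v}$ by $(\boldsymbol{w}+\boldsymbol{v})+u_0$; because mutual information is invariant under translating either argument by a deterministic vector, $\MI{\boldsymbol{w}+\boldsymbol{u}+\boldsymbol{v};\boldsymbol{w}|\boldsymbol{u}=u_0}=\MI{\boldsymbol{w}+\boldsymbol{v};\boldsymbol{w}}$ for every $u_0$, whence $\MI{\boldsymbol{w}+\boldsymbol{u}+\boldsymbol{v};\boldsymbol{w}|\boldsymbol{u}}=\MI{\boldsymbol{w}+\boldsymbol{v};\boldsymbol{w}}$. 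Symmetrically, conditioning on $\boldsymbol{w}$ (independent of $(\boldsymbol{u},\boldsymbol{v})$) gives $\MI{\boldsymbol{w}+\boldsymbol{u}+\boldsymbol{v};\boldsymbol{u}|\boldsymbol{w}}=\MI{\boldsymbol{u}+\boldsymbol{v};\boldsymbol{u}}$.

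The second step combines these. Applying the chain rule to $\MI{\boldsymbol{w}+\boldsymbol{u}+\boldsymbol{v};\boldsymbol{w},\boldsymbol{u}}$ in both orders gives $\MI{\boldsymbol{w}+\boldsymbol{u}+\boldsymbol{v};\boldsymbol{u}}+\MI{\boldsymbol{w}+\boldsymbol{u}+\boldsymbol{v};\boldsymbol{w}|\boldsymbol{u}}=\MI{\boldsymbol{w}+\boldsymbol{u}+\boldsymbol{v};\boldsymbol{w}}+\MI{\boldsymbol{w}+\boldsymbol{u}+\boldsymbol{v};\boldsymbol{u}|\boldsymbol{w}}$; discarding the nonnegative term $\MI{\boldsymbol{w}+\boldsymbol{u}+\boldsymbol{v};\boldsymbol{u}}$ on the left and substituting the two identities above yields $\MI{\boldsymbol{w}+\boldsymbol{v};\boldsymbol{w}}\le\MI{\boldsymbol{w}+\boldsymbol{u}+\boldsymbol{v};\boldsymbol{w}}+\MI{\boldsymbol{u}+\boldsymbol{v};\boldsymbol{u}}$. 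Since $\boldsymbol{w}\to\boldsymbol{w}+\boldsymbol{u}\to\boldsymbol{w}+\boldsymbol{u}+\boldsymbol{v}$ is a Markov chain (given $\boldsymbol{w}+\boldsymbol{u}$, the last term is that sum plus $\boldsymbol{v}$, which is independent of $(\boldsymbol{w},\boldsymbol{u})$), the data-processing inequality gives $\MI{\boldsymbol{w}+\boldsymbol{u}+\boldsymbol{v};\boldsymbol{w}}\le\MI{\boldsymbol{w}+\boldsymbol{u};\boldsymbol{w}}$, and~\eqref{eq:ZE04} follows. Every ingredient --- the chain rule, data processing, nonnegativity, and translation-invariance of (conditional) mutual information --- is valid for real-valued as well as discrete-valued random variables, indeed for random vectors in any abelian group, so the stated generality is covered. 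I do not expect a genuinely hard step: the only slightly non-obvious move, which I would flag, is introducing the merged variable $\boldsymbol{w}+\boldsymbol{u}+\boldsymbol{v}$ and conditioning it alternately on $\boldsymbol{u}$ and on $\boldsymbol{w}$ so that $\MI{\boldsymbol{w}+\boldsymbol{v};\boldsymbol{w}}$ and $\MI{\boldsymbol{u}+\boldsymbol{v};\boldsymbol{u}}$ reappear; after that, the chain rule and the data-processing inequality finish the argument mechanically.
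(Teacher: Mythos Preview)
The paper does not actually prove this lemma; it simply quotes it as \cite[Lemma~1]{Zamir04Gauss} and then uses it. Your argument is correct and is essentially the standard proof (and, as far as I recall, the one in the cited reference): introduce the merged sum $\boldsymbol{w}+\boldsymbol{u}+\boldsymbol{v}$, expand $\MI{\boldsymbol{w}+\boldsymbol{u}+\boldsymbol{v};\boldsymbol{w},\boldsymbol{u}}$ by the chain rule in both orders, use independence to collapse the two conditional terms to $\MI{\boldsymbol{w}+\boldsymbol{v};\boldsymbol{w}}$ and $\MI{\boldsymbol{u}+\boldsymbol{v};\boldsymbol{u}}$, drop the nonnegative $\MI{\boldsymbol{w}+\boldsymbol{u}+\boldsymbol{v};\boldsymbol{u}}$, and finish with data processing on the Markov chain $\boldsymbol{w}\to\boldsymbol{w}+\boldsymbol{u}\to\boldsymbol{w}+\boldsymbol{u}+\boldsymbol{v}$. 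There is nothing to compare against in the present paper, and no gap in your write-up.
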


Following a similar procedure as in~\cite{Zamir04Gauss}, we can show
\begin{align}
  \MI{\mH\rw + \rv ; \mH\rw} \leq \MI{\mH\rw + \mH\rtw; \mH \rw} + \MI{ \mH\rtw+\rv ;\mH\rtw } \label{eq:thmGau:t1}
\end{align}
where %$(\mH\rtw, \rv, \mH\rw)$ are independent. Note that
$(\rtw, \rv, \rw)$ are mutually independent complex-valued random
vectors and $\mH$ is a determined matrix.  Moreover, $\mH\rw$ is a
sufficient statistics of $\rw$ for $\mH\rw + \rv$ and $\mH\rw +
\mH\rtw$; and  $\mH\rtw$ is a sufficient statistics of $\rtw$ for
$\mH\rtw+\rv$.  Hence~\eqref{eq:thmGau:t1} is equivalent to:
\begin{align*}
  \MI{\mH\rw + \rv ; \rw} \leq \MI{\mH\rw + \mH\rtw; \rw} + \MI{ \mH\rtw+\rv ;\rtw }\,.
\end{align*}
By noting that $\MEXP\|\rw\|^2 \leq \PP$,~\eqref{eq:Gau1} is established.

In the case of $\rtw \sim \CN{0}{\frac{\PP}{M}\mI}$, we need to show that 
\begin{align*}
  C' = \sup_{\MEXP \|\ra\|^2\leq\PP} \MI{\mH\ra + \mH\rtw; \ra} = C^*
\end{align*}
where $C^*$ is given in~\eqref{eq:def_Cp}. 
Consider the (full) SVD $\mH=\mW D\mV^\dag$, where $D$ is
$N \times M$ nonnegative and diagonal matrix, and $\mW$ and $\mV$ are
$N\times N$ and $M \times M$ unitary matrix.   We have
\begin{align*}
  C' = \sup_{\MEXP \|\ra'\|^2\leq \PP } \MI{ D \ra' + D \rtw'; \ra'}\,. 
\end{align*}
where $\ra' = \rV^\dag \ra$. We observe that $\ra'\mapsto D \ra' + D \rtw'$ is
exactly $\min(M,N)$ parallel Gaussian channels with the same gains. It
is not difficult to see that
\begin{align*}
  C' &\leq \sum_{j=1}^{\min(M,N)} \LOG{1+\frac{\PP/\min(M,N)}{\PP/M}} = C^*\,.
\end{align*}
Thus,~\eqref{eq:Gau2} is established.

For channel~\eqref{eq:sys}, by stacking $\rw^n$ and
$\{\mH_{12}\rx+\ru_1 \}^n$ into two vectors of length $nM_1$ and
$nN_1$, respectively, and applying~\eqref{eq:Gau2} with channel matrix
$\text{diag}(\mH_{11}[1],\dots, \mH_{11}[n])$, we
obtain~\eqref{eq:Gau3} if $\rH^n$ is constant. Averaging over the
distribution of $\rH^n$ yields the general result~\eqref{eq:Gau3}.

\subsection{Proof of Lemma~\ref{lemma:amp}}\label{app:amp}
 Since the two sides of~\eqref{eq:amp} are expectations over the
  joint distribution of $(\rLam_1,\rLam_2)$, it suffices to show that
  for each realization of the matrices, denoted by $(\mLam_1, \mLam_2)$,
\begin{align}
   \mathcal{I}(\mLam_1&\rx + \ru;\rx) - \MI{\mLam_2\rx + \ru;\rx} \nonumber \\
   & \geq -2\LOG{ \frac{\det \mLam_2}{\det \mLam_{\min}} } \\
   & \geq -2\log^+(\det\mLam_2)-2\log^+\left(\frac{1}{\det\mLam_{\min}}\right)
   \label{eq:amp_t1} 
 \end{align}
where $\mLam_{\min}=\min(\mLam_1, \mLam_2) > 0$.

  By data process inequality~\cite[Chapter 2]{Cover91IT},
  \begin{align}
    \mathcal{I} (\mLam_1&\rx + \ru; \rx) - \MI{\mLam_2\rx + \ru;\rx} \nonumber \\
   &\geq \MI{\mLam_{\min}\rx + \ru; \rx} - \MI{\mLam_2\rx + \ru;\rx} \nonumber \\
   &= \MI{ \mLam_2\rx + \mLam_2\mLam^{-1}_{\min} \ru; \rx} - \MI{
     \mLam_2\rx + \ru;\rx } \,.
   \label{eq:I-I}
  \end{align}
  Let $\mSig_{\ru}$ be the covariance matrix of $\ru$ and $\ru'$ be an
  independent CSCG random vector with covariance 
  $\mLam_2\mLam_{\min}^{-1}\mSig_{\ru} \mLam_{\min}^{-1} \mLam_2 -
  \mSig_{\ru}$ (which is evidently positive
  semi-definite). Then~\eqref{eq:I-I} can be further written as 
   \begin{align}
    \mathcal{I} (\mLam_1&\rx + \ru; \rx) - \MI{\mLam_2\rx + \ru;\rx} \nonumber \\
   &= \MI{ \mLam_2\rx + \ru + \ru' ; \rx } - \MI{ \mLam_2\rx + \ru;\rx } \nonumber\\
    &= - \MI{\mLam_2\rx + \ru; \rx | \mLam_2\rx + \ru + \ru'
   } \label{eq:amp_t2}
  \end{align}
  where~\eqref{eq:amp_t2} is because $\markov{\rx}{\mLam_2\rx +
    \ru}{\mLam_2\rx + \ru + \ru'}$ is Markov.  Therefore, it boils
  down to upper bounding the mutual information in~\eqref{eq:amp_t2}:
%, which can be  rewritten as
%  \begin{align}
%   \MI{ \ru' ; \ru + \ru' | \mLam_2\rx + \ru + \ru'} \,. \label{eq:amp_t3}
% \end{align}
 \begin{align}
    \mathcal{I}( \mLam_2\rx + &\ru; \rx | \mLam_2\rx + \ru + \ru' )
    \nonumber \\
    &=  \MI{ \ru' ; \ru + \ru' | \mLam_2\rx + \ru +
      \ru'}  \nonumber \\% \label{eq:amp_t3} \\
    &  \leq \MI{ \ru' ; \ru + \ru' } \label{eq:amp_t4} \\
    &= 2\LOG{ \frac{\det
        \mLam_2}{\det \mLam_{\min}} } \nonumber \\
    &\leq 2\log^+\det \mLam_2 +
    2\log^+\left(\frac{1}{\det\mLam_{\min}}\right) \nonumber
  \end{align}
  where in~\eqref{eq:amp_t4} we have used the fact that
  $\markov{\ru'}{\ru + \ru'}{\mLam_2\rx + \ru + \ru'}$ forms a Markov
  chain. 
  We have thus established~\eqref{eq:amp_t1}. Lemma~\ref{lemma:amp} follows by
  taking the expectation on both sides.

\subsection{Proof of Lemma~\ref{lemma:phase}} \label{app:phase}
Let a random vector $\rx$ and another random object $\rv$ have a joint
distribution.  Define the minimum mean-square error~(MMSE) of
estimating $\rx$ conditional on $\rv$ and $\sqrt{t}\,\rx+\ru$,
where $\ru\sim\mathcal{CN}(0,\mI)$ is independent of $(\rx,\rv)$ as
  \begin{align}
    \mmse{\rx\,;t | \rv} = \MEXP\left[
      \left\| \rx-\MEXP \Big[\rx\big|\sqrt{t}\,\rx+\ru, \rv\Big] \right\|^2 \right]\,.
     %  (\rx-\MEXP[\rx|\sqrt{t}\rx+\ru, \rv])^\dag
     % (\rx-\MEXP[\rx|\sqrt{t}\rx+\ru, \rv]) ]      
\end{align}
%and $U$ is any r.v. with a joint distribution with $\rx$ but
%independent of noise $\ru$.
We have the following formula that relates the MMSE and mutual information~\cite{Guo11MMSE}:
\begin{align}
  \MI{\sqrt{t}\,\rx+\ru; \rx \big| \rv} = \int_0^t \mmse{ \rx; \tau | \rv }  \ud \tau \label{eq:app_max_rev1_t2}
\end{align}

  Find an arbitrary orthonormal basis in space $\mathbb{C}^{K_2}$, say, $\{e_i\}_1^{K_2}$;
  then construct $K_2$ subsets of $\{e_i\}_1^{K_2}$ such that each subset has
  $K_1$ elements and each $e_i$ is included in exact $K_1$ subsets; each subset
  corresponds to a $K_1 \times K_2$ matrix, called $\mB_1, \dots,
  \mB_{K_2}$. Then we see that $\mB_j\mB_j^\dag=\mI_{K_1}$ for all $j=1,\dots,K_2$ and $\frac{1}{K_1}\sum_{j=1}^{K_2}\mB_j^\dag \mB_j=\mI_{K_2}$. Therefore, for any $\rv$ and $\rz$
  \begin{align}
    &\frac{1}{K_1} \sum_{j=1}^{K_2} \mmse{\mB_j\rz\,; t \big| \rv}  \nonumber \\
    & = \frac{1}{K_1} \sum_{j=1}^{K_2} \MEXP\left[ \left\| \mB_j\rz -
        \MEXP\left[\mB_j\rz \big| \Sqrt{t} \mB_j\rz + \mB_j\ru_2,
          \rv\right] \right\|^2 \right] \nonumber \\
    & \geq \frac{1}{K_1} \sum_{j=1}^{K_2} \MEXP\Big[ \left( \mB_j\rz - \MEXP\left[\mB_j\rz \big| \Sqrt{t} \rz + \ru_2, \rv\right] \right)^\dag \left( \mB_j\rz - \MEXP\left[\mB_j\rz \big| \Sqrt{t} \rz + \ru_2, \rv\right] \right) \Big] \label{eq:phaseT1} \\
& = \MEXP\left[ \left( \rz - \MEXP\left[\rz \big| \Sqrt{t} \rz + \ru_2, \rv \right] \right)^\dag \left( \frac{1}{K_1} \sum_{j=1}^{K_2} \mB_j^\dag \mB_j \right) \left( \rz - \MEXP\left[\rz \big| \Sqrt{t} \rz + \ru_2, \rv\right] \right) \right] \nonumber \\
& = \MEXP\left[ \left( \rz - \MEXP\left[\rz \big| \Sqrt{t} \rz + \ru_2, \rv \right] \right)^\dag \left( \rz - \MEXP\left[\rz \big| \Sqrt{t} \rz + \ru_2, \rv \right] \right) \right] \nonumber \\
& = \mmse{\rz\,;t\big| \rv}  \label{eq:k1k2}
  \end{align}
where~\eqref{eq:phaseT1} is due to the fact that we have better estimation with better observation. 
Letting $\rz = \rV_2^\dag \rx$ and $\rv=\Big( \rV_3^\dag \rx + \ru_3,
\rV \Big)$ in~\eqref{eq:k1k2}, we have
\begin{align}
  \frac{1}{K_1} \sum_{j=1}^{K_2} \mmse{\mB_j\rV_2^\dag \rx \,; t \big| \rV_3^\dag \rx + \ru_3, \rV } \geq \mmse{\rV_2^\dag\rx \,; t \big| \rV_3^\dag \rx + \ru_3, \rV} \label{eq:phaseT2}
\end{align}

Furthermore, $\Probb_{\mB_j\rV_2|\rV_3}$ and $\Probb_{\rV_1|\rV_3}$ are uniform
distributions on $\mathcal{V}_1$ by assumption, hence
$(\mB_j\rV_2, \rV_3)$ and $(\rV_1, \rV_3)$ are identically distributed. Therefore, 
\begin{align}
  &\frac{K_2}{K_1} \MI{\rV_1^\dag \rx+ \ru_1; \rx \Big| \rV_3^\dag \rx + \ru_3, \rV} \nonumber \\
  &\;\; = \frac{1}{K_1} \sum_{j=1}^{K_2} \MI{\mB_j\rV_2^\dag \rx+ \ru_1; \rx \Big| \rV_3^\dag \rx + \ru_3, \rV} \nonumber \\
&\;\; = \frac{1}{K_1} \sum_{j=1}^{K_2} \int_0^1 \mmse{\mB_j\rV_2^\dag\rx\,; t \Big| \rV_3^\dag \rx + \ru_3, \rV} \ud t \label{eq:phaseT3} \\
&\;\; \geq \int_0^1 \mmse{\rV_2^\dag\rx\,;t\Big| \rV_3^\dag \rx + \ru_3, \rV} \ud t \label{eq:phaseT4} \\
&\;\; = \MI{\rV_2^\dag \rx+ \ru_2; \rx \Big| \rV_3^\dag \rx + \ru_3, \rV} \label{eq:phaseT5}
\end{align}
where~\eqref{eq:phaseT3} and~\eqref{eq:phaseT5} are due
to~\eqref{eq:app_max_rev1_t2}, and~\eqref{eq:phaseT4} is due to~\eqref{eq:phaseT2}. 
We have thus established~\eqref{eq:phase1}.

To show~\eqref{eq:phase2}, we stack $\rx[1], \dots, \rx[n]$ into a vector
 $\overline{\rx}$ of size $nM$, stack $\ru_j[1], \dots, \ru_j[n]$ into a
 vector $\overline{\ru}_j$ of size $nN_j$ for $j=1,2$, and construct random matrix
 $\overline{\rV}_j=\text{diag}(\rV_j[1], \dots, \rV_j[n])$ for $j=1,2$. Then the sequence $\{\rV_j^\dag[i] \rx[i] + \ru_j[i]\}_{i=1}^n$ can be represented as $\overline{\rV_j}^\dag\overline{\rx}+\overline{\ru}_j$. Let $\overline{\mB}_j=\text{diag}(\mB_j, \dots, \mB_j)$. It is easy to see that $\overline{\mB}_j \overline{\mB}_j=\mI_{nK_1}$ and $\frac{1}{K_1}\sum_{j=1}^{K_2}\overline{\mB}_j^\dag \overline{\mB}_j= \mI_{nK_2}$. Although $\overline{\rV}_j$ are not uniformly distributed, it is still true that $(\overline{\mB}_j\overline{\rV}_2, \overline{\rV}_3)$ and $(\overline{\rV}_1, \overline{\rV}_3)$ have identical distribution. 
Therefore,~\eqref{eq:phase2} follows by similar arguments as in above.

\subsection{Proof of Lemma~\ref{lm:maxentropy}}\label{app:maxentropy}

The equality~\eqref{eq:maxentropy1} is straightforward. We focus on the
  inequality~\eqref{eq:maxentropy3}. 

Consider the eigenvalue decomposition of the noise variance $\Sigma_1=W_1 \Lambda_1
 W_1^\dag$, then $\ry' = W_1\Lambda^{-1/2}\ry = \rA' \rx + \rn_1'$,
 where $\rn_1'= W_1\Lambda_1^{-1/2} \rn_1 \sim \CN{0}{I}$ and $\rA' =
 W_1\Lambda_1^{-1/2} \rA$, which is still isotropic. Also, $\ry'$ is a
 sufficient statistics of $\ry$. Therefore,
 applying~\eqref{eq:app_max_rev1_t2} with  $\rv=(\rz, \rA', \rB)$, we
 have
\begin{align}
  \MI{\ry; \rx | \rz, \rA, \rB}
  &=  \MI{\ry'; \rx | \rz, \rA', \rB} \nonumber \\
  & = \int_0^1 \mmse{ \rA'\rx; t \Big| \rz, \rA', \rB} \ud t. \label{eq:app_max_rev1_t1} \end{align}
Note that $\rA'$ is still isotropic by Definition 1.
% since matrix $W$ acts on LHS of $\rA$. 

Given $\rA'=A'$ and $\rB=B$, the MMSE in~\eqref{eq:app_max_rev1_t1}
can be expressed as
%$\mmse{ \rA'\rx;t | \rz, \rA', \rB}$
\begin{align}  \label{eq:EAx}
  \mmse{A'\rx;t|\rz}
  &= \mmse{A'\rx;t|B\rx+\rn_2} \\
  % &= \MEXP\left[ \left\| A'\rx -
  %     \MEXP\left[A'\rx\big|\sqrt{t}\,A'\rx+\rn_1,B\rx+\rn_2\right]
  %   \right\|^2 \right] \\
  &= \MEXP %\left[ 
  \left\| A'\rx - A'\MEXP\left[ \rx\Bigg|
        \begin{bmatrix}
          \sqrt{t}A' \\ B
       \end{bmatrix}
       \rx+
       \begin{bmatrix}
         \rn_1' \\ \rn_2
       \end{bmatrix}
     \right] \right\|^2 %\right]
\end{align}
which is the MMSE of $A'\rx$ conditioned on a linear transformation of
$\rx$ with additive Gaussian noise.  Let the covariance of $\rx$ be
$\mQ=\cov{\rx}$.  Let $\rx_{\mQ} \sim \CN{0}{\mQ}$ be Gaussian with the same
covariance.  Then the MMSE~\eqref{eq:EAx} cannot decrease if the input
$\rx$ is replaced by $\rx_{\mQ}$, i.e.,
\begin{align}  \label{eq:AxG}
  \mmse{ A'\rx;t | \rz } \leq \mmse{ A'\rx_{\mQ}; t | \rz_{\mQ}}
\end{align}
holds for every $t\ge0$,
where $\rz_{\mQ}=B\rx_{\mQ}+\rn_2$.  The reason is that the estimator that
minimizes the MMSE for $A'\rx_{\mQ}$ is linear, which also achieves the
same MMSE if applied to $A'\rx$.  This implies that using the optimal
(nonlinear) estimator for $A'\rx$ can only yield a smaller MMSE.

\begin{comment}
For any random vector $\rx$, given $\rA'=A'$ and $\rB=B$,  
\begin{align}
  \mmse{ \sqrt{t}  A'\rx \Big| \rz } \leq \mmse{ \sqrt{t}  A\rtx_{\mQ} \Big| \rz(\rtx_{\mQ})}
\end{align}
where $\rtx_{\mQ}$ is a Gaussian vector with same covariance as $\rx$ and $\rz(\rtx_{\mQ})$ denotes the corresponding channel output.   
\end{comment}

Plugging~\eqref{eq:AxG} into~\eqref{eq:app_max_rev1_t1}, we see that,
in order to maximize the mutual information $\MI{\ry; \rx | \rz, \rA,
  \rB}$, it suffices to restrict the input vector on the set of
Gaussian random vectors, i.e., it boils down to finding the 
covariance matrix $\mQ$ that maximizes the mutual information.  As we
shall see, the optimal $\mQ$ is $(\PP/M)\mI_M$.

Consider the eigenvalue decomposition $\mQ=\mU \Lambda \mU^\dag$.
Then $\mU^\dag\rx_{\mQ}$ consists of independent entries.
Due to the isotropy of $\rA'$ and $\rB$, the statistics of
$\rA'\mU^\dag\rx_{\mQ}$ and $\rB\mU^\dag\rx_{\mQ}$ 
are identically distributed as
$\rA'\rx_{\mQ}$ and $\rB\rx_{\mQ}$, respectively.  Hence the MMSE is
invariant to the eigenvectors of $\mQ$. 
\begin{comment}
Then $\mU^\dag\rx_{\mQ}$ is a Gaussian vector with i.i.d.\
elements. 
we have  
\begin{align}
  \mmse{ \rA'\rx_{\mQ}; t | \rz, \rA', \rB} = \MEXP \mmse{
    \rA'\mU^\dag\rtx_{\mQ}; t \Big| \rz(\mU^\dag\rtx_{\mQ}), \rA', \rB}.
\end{align}  
\end{comment}
Therefore, the maximization problem can be further restricted to all
Gaussian $\rx_{\mQ}$ with independent entries, i.e., $\mQ$ is diagonal. 

To maximize the mutual information, the diagonal entries of $\mQ$ must
all be equal: Let $\pi$ be the collection of all $M!$ permutation
matrices for the $M$-dimension linear space. By isotropy of $\rA$ and
the concavity of conditional MMSE, we have 
\begin{align}
  \mmse{ \rA'\rx_{\mQ} ; t \Big| \rz_{\mQ}, \rA', \rB}
  &= \frac{1}{M!}\sum_{\Pi \in \pi} \mmse{ \rA'\rx_{\Pi\mQ\Pi^\dag}; t \Big| \rz_{\Pi\mQ\Pi^\dag}, \rA', \rB} \nonumber \\
  &\leq \mmse{ \rA' \rx_{\mR} | \rz_{\mR}, \rA', \rB} \nonumber 
 %  &\leq \mmse{ \rA' \frac{1}{M!} \sum_{\Pi \in \pi} \rtx_{\Pi\mQ}; t
 % \Big| \rz \Big( \frac{1}{M!} \sum_{\Pi \in \pi} \Pi\rtx_{\mQ}\Big),
 % \rA', \rB}. \nonumber 
\end{align}
where
\begin{align}
  \label{eq:3}
  \mR = \frac{1}{M!} \sum_{\Pi \in \pi} \Pi\mQ\Pi^\dag
\end{align}
have identical diagonal entries.
%Since each element in $\frac{1}{M!} \sum_{\Pi \in \pi} \Pi\rtx_{\mQ}$
%has same covariance,
Therefore, to maximize the mutual information, we can further restrict
the optimization problem to be on Gaussian i.i.d.\ inputs. In other words,  
\begin{align}
  \MI{\ry; \rx | \rz, \rA, \rB}  \leq \MI{\rA\rx_{\rho\mI}+\rn_1; \rx_{\rho\mI} | \rz_{\rho\mI}, \rA, \rB} 
\end{align}
for some $\rho\le \PP/M$.

Finally, we show that the maximum mutual information is achieved by $\rho=\PP/M$.
Suppose otherwise, i.e., $\rho < \PP/M$.  For convenience, denote
$\rx_{\rho\mI}$ by $\rx_\rho$.  Let $\rhx \sim
\CN{0}{(\PP/M-\rho)\mI_M}$ be independent of $\rx_\rho$. Then
$x_{\PP/M} = x_\rho + \hat{x}$.  Given
$\rA=\mA$ and $\rB=\mB$,
\begin{align}
  \MI{\mA\rx_{\rho}+\rn_1; \rx_{\rho} | \rz_{\rho}}
  &= \MI{\mA(\rx_{\rho}+\rhx)+\rn_1; \rx_{\rho}+\rhx | 
   \mB(\rx_{\rho}+\rhx)+\rn_2, \rhx} \nonumber \\
  & = \MI{\mA\rx_{\PP/M}+\rn_1; \rx_{\PP/M} | 
   \mB\rx_{\PP/M}+\rn_2, \rhx} \nonumber \\
& \leq  \MI{\mA\rx_{\PP/M}+\rn_1; \rx_{\PP/M}, \rhx | 
   \mB\rx_{\PP/M}+\rn_2}  \label{eq:app_max_rev1_t3}\\
 & = \MI{\mA\rx_{\PP/M}+\rn_1; \rx_{\PP/M} | 
   \mB\rx_{\PP/M}+\rn_2} + \MI{\mA\rx_{\PP/M}+\rn_1; \rhx | 
   \mB\rx_{\PP/M}+\rn_2, \rx_{\PP/M}} \nonumber \\
 & = \MI{\mA\rx_{\PP/M}+\rn_1; \rx_{\PP/M} | 
   \mB\rx_{\PP/M}+\rn_2} + \MI{ \rn_1; \rhx | \rn_2, \rx_{\PP/M}} \nonumber \\
 & = \MI{\mA\rx_{\PP/M}+\rn_1; \rx_{\PP/M} | 
   \mB\rx_{\PP/M}+\rn_2} \label{eq:79} 
\end{align}
where in~\eqref{eq:app_max_rev1_t3} is due to chain rule
and~\eqref{eq:79} is due to independence of the signals and the
noises. Similarly,~\eqref{eq:maxentropy2} can be proved by stacking
the sequences of vectors into larger vectors. 

\section*{Acknowledgment}

The authors would like to thank Associate Editor Syed Jafar for useful
suggestions and for pointing out a mistake in the proof in an earlier
draft of the paper. 

\bibliographystyle{IEEEtran}
\bibliography{IEEEabrv,zybibset,zybib_IC}

% Generated by IEEEtran.bst, version: 1.13 (2008/09/30)
\begin{thebibliography}{10}
\providecommand{\url}[1]{#1}
\csname url@samestyle\endcsname
\providecommand{\newblock}{\relax}
\providecommand{\bibinfo}[2]{#2}
\providecommand{\BIBentrySTDinterwordspacing}{\spaceskip=0pt\relax}
\providecommand{\BIBentryALTinterwordstretchfactor}{4}
\providecommand{\BIBentryALTinterwordspacing}{\spaceskip=\fontdimen2\font plus
\BIBentryALTinterwordstretchfactor\fontdimen3\font minus
  \fontdimen4\font\relax}
\providecommand{\BIBforeignlanguage}[2]{{%
\expandafter\ifx\csname l@#1\endcsname\relax
\typeout{** WARNING: IEEEtran.bst: No hyphenation pattern has been}%
\typeout{** loaded for the language `#1'. Using the pattern for}%
\typeout{** the default language instead.}%
\else
\language=\csname l@#1\endcsname
\fi
#2}}
\providecommand{\BIBdecl}{\relax}
\BIBdecl

\bibitem{Etkin08IFC}
R.~H. Etkin, D.~N.~C. Tse, and H.~Wang, ``Gaussian interference channel
  capacity to within one bit,'' \emph{{IEEE} Trans. Inf. Theory}, vol.~54,
  no.~12, pp. 5534--5562, Dec. 2008.

\bibitem{Cadambe08IFA}
V.~R. Cadambe and S.~A. Jafar, ``Interference alignment and degrees of freedom
  of the {K-user} interference channel,'' \emph{{IEEE} Trans. Inf. Theory},
  vol.~54, no.~8, pp. 3425--3441, Aug. 2008.

\bibitem{Shang10MIMOIC}
X.~Shang, B.~Chen, G.~Kramer, and H.~V. Poor, ``Capacity regions and sum-rate
  capacities of vector {Gaussian} interference channels,'' \emph{{IEEE} Trans.
  Inf. Theory}, vol.~56, no.~10, pp. 5030--5044, Oct. 2010.

\bibitem{Gou08KIFC-MIMO}
T.~Gou and S.~A. Jafar, ``Degrees of freedom of the {$K$} user {$M \times N$}
  {MIMO} interference channel,'' \emph{{IEEE} Trans. Inf. Theory}, vol.~56,
  no.~12, pp. 6040--6057, Dec 2010.

\bibitem{Annapureddy09MIMOIC}
\BIBentryALTinterwordspacing
V.~S. Annapureddy and V.~V. Veeravalli, ``Sum capacity of {MIMO} interference
  channels in the low interference regime,'' \emph{preprint}, Sep. 2009.
  [Online]. Available: \url{http://arxiv.org/abs/0909.2074v1}
\BIBentrySTDinterwordspacing

\bibitem{Jafar07MIMOIFC}
S.~A. Jafar and M.~J. Fakhereddin, ``Degrees of freedom for the {MIMO}
  interference channels,'' \emph{{IEEE} Trans. Inf. Theory}, vol.~53, no.~7,
  pp. 2637--2642, Jul. 2007.

\bibitem{RajPra09IT}
A.~Raja, V.~M. Prabhakaran, and P.~Viswanath, ``The two-user compound
  interference channel,'' \emph{{IEEE} Trans. Inf. Theory}, vol.~55, no.~11,
  pp. 5100--5120, Nov. 2009.

\bibitem{Raja09DMT-IFC}
A.~Raja and P.~Viswanath, ``Diversity-multiplexing tradeoff of the two-user
  interference channel,'' \emph{{IEEE} Trans. Inf. Theory}, 2011, to appear.

\bibitem{Akuiyibo08IFCMIMO}
E.~Akuiyibo, O.~L{\'e}v{\^e}que, and C.~Vignat, ``High {SNR} analysis of the
  {MIMO} interference channel,'' in \emph{Proc. IEEE Int. Symp. Inf. Theory},
  Toronto, Jul. 2008, pp. 905 -- 909.

\bibitem{Huang09MIMODoFb}
\BIBentryALTinterwordspacing
C.~Huang, S.~A. Jafar, S.~{Shamai~(Shitz)}, and S.~Vishwanath, ``On degrees of
  freedom region of {MIMO} networks without {CSIT},'' \emph{preprint}, 2009.
  [Online]. Available: \url{http://arxiv.org/abs/0909.4017}
\BIBentrySTDinterwordspacing

\bibitem{Vaze09IFC}
\BIBentryALTinterwordspacing
C.~S. Vaze and M.~K. Varanasi, ``The degrees of freedom regions of {MIMO}
  broadcast, interference, and cognitive radio channels with no {CSIT},''
  \emph{preprint}, Oct. 2009. [Online]. Available:
  \url{http://arxiv.org/abs/0909.5424v2}
\BIBentrySTDinterwordspacing

\bibitem{ZhuGuo09Allerton}
Y.~Zhu and D.~Guo, ``Isotropic {MIMO} interference channels without {CSIT: T}he
  loss of degrees of freedom,'' in \emph{Proc.\ Allerton Conf.\ Commun.,
  Control, and Computing}.\hskip 1em plus 0.5em minus 0.4em\relax Monticello,
  IL, USA, Oct. 2009.

\bibitem{Tse05book}
D.~N.~C. Tse and P.~Viswanath, \emph{Fundamentals of Wireless
  Communications}.\hskip 1em plus 0.5em minus 0.4em\relax Cambridge University
  Press, 2005.

\bibitem{Zheng02MIMO}
L.~Zheng and D.~Tse, ``Communicating on the {Grassmann} manifold: {A} geometric
  approach to the non-coherent multiple antenna channel,'' \emph{{IEEE} Trans.
  Inf. Theory}, vol.~48, no.~2, pp. 359--383, Feb. 2002.

\bibitem{Telatar99MIMO}
E.~Telatar, ``Capacity of multi-antenna {Gaussian} channels,'' \emph{European
  Transactions on Telecommunications}, vol.~10, no.~6, pp. 585--595, 1999.

\bibitem{Jafar09BlindIA}
\BIBentryALTinterwordspacing
S.~A. Jafar, ``Exploiting channel correlations -- {Simple} interference
  alignment schemes with no {CSIT},'' \emph{preprint}, Oct. 2009. [Online].
  Available: \url{http://arxiv.org/abs/0910.0555v1}
\BIBentrySTDinterwordspacing

\bibitem{KeWan10X}
\BIBentryALTinterwordspacing
L.~Ke and Z.~Wang, ``Degrees of freedom regions of two-user {MIMO Z} and full
  interference channels: The benefit of reconfigurable antennas,'' \emph{{IEEE}
  Trans. Inf. Theory}, Sep. 2010, submitted. [Online]. Available:
  \url{http://arxiv.org/pdf/1011.2196}
\BIBentrySTDinterwordspacing

\bibitem{Han81Interference}
T.~S. Han and K.~Kobayashi, ``A new achievable rate region for the interference
  channel,'' \emph{{IEEE} Trans. Inf. Theory}, vol.~27, no.~1, pp. 49--60, Jan
  1981.

\bibitem{Zamir04Gauss}
R.~Zamir and U.~Erez, ``A {Gaussian} input is not too bad,'' \emph{{IEEE}
  Trans. Inf. Theory}, vol.~50, no.~6, pp. 1362 -- 1367, Jun. 2004.

\bibitem{Zhu09IFC}
Y.~Zhu and D.~Guo, ``Ergodic fading {Z}-interference channels without state
  information at transmitters,'' \emph{{IEEE} Trans. Inf. Theory}, vol.~57,
  no.~5, pp. 2627 -- 2647, May 2011.

\bibitem{Cover91IT}
T.~M. Cover and J.~A. Thomas, \emph{Elements of Information Theory},
  3rd~ed.\hskip 1em plus 0.5em minus 0.4em\relax John Wiley \& Sons, Inc.,
  2006.

\bibitem{Guo11MMSE}
D.~Guo, Y.~Wu, S.~{Shamai (Shitz)}, and S.~Verd{\'u}, ``Estimation of
  {non-Gaussian} random variables in {Gaussian} noise: {Properties} of the
  minimum mean-square error,'' \emph{{IEEE} Trans. Inf. Theory}, vol.~57, April
  2011.

\end{thebibliography}
\end{document}